\title[running  = {A Survey of Interactive Verifiable Computing},
       subtitle = {Utilizing Low-degree Polynomials}
      ]{A Survey of Interactive Verifiable Computing}
\theoremstyle{plain}
\theoremstyle{definition}
\begin{document}

\maketitle

% Provide the keywords *before* the abstract
% When keywords contain macros provide the text version as the optional argument
\keywords{Zero-Knowledge, Interactive Proof Systems, Verifiable Computing, Sum-Check Protocol, GKR Protocol, Complexity Theory}

% Provide the abstract of your paper
\begin{abstract}
This survey provides a comprehensive examination of verifiable computing, tracing its evolution from foundational complexity theory to modern zero-knowledge succinct non-interactive arguments of knowledge (ZK-SNARKs). We explore key developments in interactive proof systems, knowledge complexity, and the application of low-degree polynomials in error detection and verification protocols. The survey delves into essential mathematical frameworks such as the Cook-Levin Theorem, the sum-check protocol, and the GKR protocol, highlighting their roles in enhancing verification efficiency and soundness. By systematically addressing the limitations of traditional NP-based proof systems and then introducing advanced interactive proof mechanisms to overcome them, this work offers an accessible step-by-step introduction for newcomers while providing detailed mathematical analyses for researchers. Ultimately, we synthesize these concepts to elucidate the GKR protocol, which serves as a foundation for contemporary verifiable computing models. This survey not only reviews the historical and theoretical advancements in verifiable computing over the past three decades but also lays the groundwork for understanding recent innovations in the field.
\end{abstract}

% A separate text-only abstract must be supplied in your final version.
% This will be used for web pages and indexing and should not contain macros.
\begin{textabstract}
This survey provides a comprehensive examination of zero-knowledge interactive verifiable computing, emphasizing the utilization of randomnes in low-degree polynomials. We begin by tracing the evolution of general-purpose verifiable computing, starting with the foundational concepts of complexity theory developed in the 1980s, including classes such as P, \(\mathsf{NP}\) and NP-completeness. Through an exploration of the Cook-Levin Theorem and the transformation between \(\mathsf{NP}\) problems like \textsc{HAMPATH} and SAT, we demonstrate the reducibility of \(\mathsf{NP}\) problems to a unified framework, laying the groundwork for subsequent advancements.

Recognizing the limitations of NP-based proof systems in effectively verifying certain problems, we then delve into interactive proof systems (IPS) as a probabilistic extension of NP. IPS enhance verification efficiency by incorporating randomness and interaction, while accepting a small chance of error for that speed. We address the practical challenges of traditional IPS, where the assumption of a \(P\) with unlimited computational power is unrealistic, and introduce the concept of secret knowledge. This approach allows a \(P\) with bounded computational resources to convincingly demonstrate possession of secret knowledge to 

\(V\), thereby enabling high-probability verification 
\(V\). We quantify this knowledge by assessing \(V\)’s ability to distinguish between a simulator and genuine \(P\), referencing seminal works such as Goldwasser et al.’s "The knowledge Complexity of Theorem Proving Procedures"

The survey further explores essential mathematical theories and cryptographic protocols, including the Schwartz-Zippel lemma and Reed-Solomon error correction, which underpin the power of low-degree polynomials in error detection and interactive proof systems. We provide a detailed, step-by-step introduction to tyhe sum-check protocol, proving its soundness and runtime characteristics.

Despite the sum-check protocol’s theoretical applicability to all \(\mathsf{NP}\) problems via SAT reduction, we highlight the sum-check protocol’s limitation in requiring superpolynomial time for general-purpose computations of a honest \(P\). To address these limitations, we introduce the GKR protocol, a sophisticate general-purpose interactive proof system developed in the 2010s. We demonstrate how the sum-check protocol integrates into the GKR framework to achieve efficient, sound verification of computations in polynomial time. This survey not only reviews the historical and theoretical advancement in verifiable computing in the past 30 years but also offers an accessible introduction for newcomers by providing a solid foundation to understand the significant advancements in verifiable computing over the past decade, including developments such as ZK-SNARKs.
\end{textabstract}

\section{Introduction}

Verifiable computing has been a pivotal area of research since the latter part of the 20th century, rooted in the study of computational complexity~\cite{DBLP:conf/stoc/Cook71, DBLP:journals/corr/abs-1009-5894}. Early investigations focused on understanding the inherent difficulty of various computational problems and developing abstractions that illustrate the hardness of these problems and the methods to solve them efficiently~\cite{DBLP:books/fm/GareyJ79}. As computing power surged, the focus shifted towards more robust computing models that enable resource-constrained clients to delegate complex computations to powerful servers. Crucially, these models ensure that the server can provide a proof of correctness for the delegated tasks, embodying the essence of verifiable computing~\cite{DBLP:conf/stoc/GoldwasserMR85}.

In recent years, the field has witnessed significant advancements~\cite{DBLP:conf/stoc/GoldwasserKR08, DBLP:conf/sp/Ben-SassonCG0MTV14}, particularly in enhancing the efficiency of verifiable computing protocols on both the client and server sides~\cite{DBLP:conf/sp/ParnoHG013, DBLP:conf/eurocrypt/Groth16, 8418611, cryptoeprint:2019/953}, these advancements often involve sophisticated mathematical concepts from cryptography. This survey aims to chart the historical development of verifiable computing, starting with fundamental concepts and models, identifying their limitations, and progressing towards more powerful and efficient protocols. To ensure accessibility, each concept introduced is accompanied by one or two minimal examples with detailed descriptions. Additionally, every claim and analysis is supported by rigorous mathematical proofs, with prerequisite mathematical foundations provided to aid newcomers. The survey is structured as follows:

\paragraph{Section 2: Complexity Theory Foundations}
We begin by exploring the foundational concepts of complexity theory developed in the 1980s, including complexity classes such as P, NP, and NP-completeness. Through the Cook-Levin Theorem~\cite{DBLP:conf/stoc/Cook71, DBLP:journals/corr/abs-1009-5894} and the transformation of \(\mathsf{NP}\) problems like \textsc{HAMPATH} and SAT, we demonstrate the reducibility of \(\mathsf{NP}\) problems to a unified framework, setting the stage for further developments.

\paragraph{Section 3: Interactive Proof Systems (IPS)}
Recognizing the limitations of NP-based proof systems in verifying certain problems effectively, we delve into interactive proof systems. IPS extend \(\mathsf{NP}\) by incorporating randomness and interaction, enhancing verification efficiency at the cost of allowing a small probability of error.
To address the practical challenges inherent in traditional IPS, we explore the concept of knowledge complexity, as originally introduced in~\cite{DBLP:conf/stoc/GoldwasserMR85}, this approach enables a \(P\)\footnote{In this survey, we use \(P\) and \(V\) to represent the prover and the verifier in the interactive proof model.} with bounded computational resources to convincingly demonstrate possession of secret knowledge to \(V\), facilitating high-probability verification by quantifying \(V\)’s ability to distinguish between a simulator and a genuine \(P\), knowledge complexity elucidates how zero-knowledge protocols maintain the secrecy of underlying information while still proving its validity.

\paragraph{Section 4: The Power of Low-Degree Polynomials}
This section explores essential mathematical theories and cryptographic protocols based on Thaler's seminar PPT~\cite{thalerppt}, including the Schwartz-Zippel lemma and Reed-Solomon error correction. These underpin the efficacy of low-degree polynomials in error detection and interactive proof systems.

\paragraph{Section 5: The Sum-Check Protocol}
We provide a detailed, step-by-step introduction to the sum-check protocol~\cite{89518}, proving its soundness and analyzing its runtime characteristics. While theoretically applicable to all \(\mathsf{NP}\) problems via SAT reduction, we highlight its limitation in requiring superpolynomial time for general-purpose computations by an honest \(P\).

\paragraph{Section 6: The GKR Protocol}
To overcome the limitations of the sum-check protocol, we introduce the GKR protocol~\cite{DBLP:conf/stoc/GoldwasserKR08}, a sophisticated general-purpose interactive proof system developed in the 2010s. We demonstrate how the sum-check protocol integrates into the GKR framework to achieve efficient and sound verification of computations in polynomial time for both \(P\) and \(V\). This protocol forms the foundation for the most recent verifiable computing models, including ZK-SNARKs.

\paragraph{}
In conclusion, this survey not only reviews the historical and theoretical advancements in verifiable computing over the past thirty years but also provides an accessible foundation for understanding significant recent developments. By assembling the various components of verifiable computing, we offer a cohesive understanding of the field’s progression towards efficient and secure computational verification protocols.

\newpage

% NOTES
% - Download abbrev3.bib and crypto.bib from https://cryptobib.di.ens.fr/
% - Use bilbio.bib for additional references not in the cryptobib database.
%   If possible, take them from DBLP.

\section{Preliminaries: Complexity Theory}

Complexity theory is a cornerstone of theoretical computer science, focusing on classifying computational problems based on the resources required to solve them, such as time and space. This field seeks to understand the fundamental limits of what can be efficiently computed and to categorize problems according to their inherent difficulty. This section provides a survey of key concepts in complexity theory, drawing upon the foundational texts by Michael Sipser \cite{DBLP:books/daglib/0086373}, M. R. Garey and D. S. Johnson \cite{DBLP:books/fm/GareyJ79}.
\subsection{P (Polynomial Time)}
\label{sec:orgc135b29}
\textbf{P} (Polynomial time decidable languages) is the class of languages that is decidable in polynomial time (\(\mathcal{O}(n^k)\)) on a deterministic single-tape Turing machine. Formally, P is defined as:
\[P = \bigcup_{k \in \mathbb{N}} \text{Time}(n^k)\] where \( k \) is a constant. This definition encompasses all decision problems deemed tractable, meaning their solutions can be efficiently computed as the input size increases.

\subsection{NP (Nondeterministic Polynomial time)}
\label{sec:orgb0392aa}

In P, we can avoid brute-force search and solve the problem using efficient algorithms. However, attempts to avoid brute force in certain other problems, including many interesting and useful ones, haven't been successful, and polynomial time algorithms that solve them aren't known to exist.

\paragraph{Language View:~\cite{sp}}
\begin{itemize}
    \item \textbf{P} is the class of language which membership can be \textbf{decided} (solved) quickly.
    \item \textbf{NP} is The class of language which membership can be \textbf{veriﬁed} quickly.
\end{itemize}

\paragraph{Turing Machine View:}
\begin{itemize}
    \item \textbf{P} is the set of problems \textbf{solvable} in polynomial time by a \textbf{deterministic TM}
    \item \textbf{NP} is the set of problems \textbf{verifiable} in polynomial time by a \textbf{deterministic TM} and \textbf{solvable} in polynomial time by a \textbf{non-deterministic TM}.
\end{itemize}

\subsubsection{NTM Decider}
\label{sec:orgf455590}
A Non-deterministic Turing Machine (NTM) decider is guaranteed to halt on all inputs. \cite{DBLP:books/daglib/0086373}
An NTM decider is designed so that every possible computation branch halts, either by accepting or rejecting the input.
This means that for any input the machine is given, it will always come to a conclusion (halt) within a finite number of steps.
There are no branches where the machine runs forever without deciding the outcome.
\subsubsection{Solving NP}
\label{sec:org96fe9c8}
The non-deterministic nature of \(\mathsf{NP}\) gives us an abstraction to imagine a machine (NTM) that could guess a solution "in parallel" and verify it quickly.
If we had such a machine, it would allow us to "solve" \(\mathsf{NP}\) problems quickly by magically finding the right solution path.
However, for real-world deterministic machines, we still don’t have efficient algorithms to solve many \(\mathsf{NP}\) problems.

\subsubsection{Example: Hamiltonian Path Problem}
\label{sec:org86bf559}
If a directed or undirected graph, G, contains a \textbf{Hamiltonian path}, a path that visits every vertex in the graph exactly once.
The \textsc{HAMPATH} problem has a feature called polynomial verifiability that is important to understand its complexity.
Verifying the existence of a Hamiltonian path may be much easier than determining its existence.

\paragraph{Theorem: \textsc{HAMPATH} belongs to NP:}
\label{sec:org081000b}

On input \(\langle G, s, t \rangle\) (Say G has m nodes):
\label{sec:org82e6224}
we non-deterministically write a sequence \(v_1,v_2,...,v_m\) of m nodes, and only accept iff: a. \(v_1 = s\)
b. \(v_m = t\) 
and c. each \((v_i,v_{i+1})\) is an edge and no \(v_i\) repeats.

\paragraph{We do not know whether the Co-HAMPATH is in NP:}
The reason is we do not know whether or not we can give a short certificate for a graph not have a Hamiltonian path~\cite{sp}.
\begin{itemize}
    \item \textbf{If P equaled NP: } Then we can test in polynomial time whether a graph has a Hamiltonian path by directly solving the problem, which yields a short certificate.
    \item \textbf{If P not equal to NP: }
then co-HAMPATH is not an \(\mathsf{NP}\) problem, since it is not easily verified.
\end{itemize}

\subsection{NP-Completeness}
\label{sec:org204d277}

The Relationship between \(\mathsf{P}\) and \(\mathsf{NP}\) indicates that whether all problems can be solved in polynomial time, typically, without \textbf{searching}. \textbf{NP-completeness} is a cornerstone concept in computational complexity theory, providing a framework for understanding the inherent difficulty of computational problems. Building upon the detailed exposition presented in Michael Sipser's Lecture Notes \cite{snp}, this section delves into the foundational definitions, key theorems, and proof techniques that characterize NP-complete problems.

\subsubsection{P \( \equiv \) NP}
\label{sec:orgb645961}
You can always eliminate searching.
If these classes were equal, any polynomially verifiable problem would be polynomially decidable.
\subsubsection{P \( \neq \) NP}
\label{sec:org3ef18a2}
There were cases where you need to search. \cite{DBLP:books/daglib/0086373}
\textit{"Most researchers believe that the two classes are not equal because people have invested enormous effort to find polynomial time algorithms for certain problems in NP, without success.
Researchers also have tried proving that the classes are unequal, but that would entail showing that no fast algorithm exists to replace brute-force search."}

\paragraph{Defn: B is NP-complete if}:
\begin{enumerate}
    \item B is a member of NP
    \item For all A in NP, A \( \leq p \) B
\end{enumerate}

Every language in \(\mathsf{NP}\) has the polynomial time reduced to a complete language of NP, which means if B is NP-complete and B is in P then P = NP.
One important advance on the P versus \(\mathsf{NP}\) question came in the early 1970s with the work of Stephen Cook and Leonid Levin.\cite{DBLP:conf/stoc/Cook71, DBLP:journals/corr/abs-1009-5894} which shows that the Boolean Satisfactory Problem (SAT) is NP-complete.

\newpage

\noindent \textbf{NP-completeness is a very important complexity property of any question}:
\begin{enumerate}
    \item Showing NP-complete is strong evidence of computational intractability (hard).
    \item Gives a good candidate for proving P \( \neq \) NP.
\end{enumerate}

Michael Sipser in 2020: \cite{quote}
\textit{"Back 20 years ago, I was working very hard to show the composite number problem is not in P.
And then, turns out, composite was in P }(proved by \cite{primes}).
\textit{So it was the wrong to pick the composite number problem, but what NP-complete is guarantees is that:
If you work on a problem, which is NP-complete, you can't pick the wrong problem, because if any problem is in \(\mathsf{NP}\) and not in P, an NP-complete problem is going to be an example of that.
Because if the NP-complete problems in P, everything in \(\mathsf{NP}\) is in P."}

\subsection{The 3SAT Problem}
\label{sec:org419409e}
\subsubsection{Conjunctive Normal Form (CNF)}
\label{sec:org3b7e654}
A boolean formula \(\phi\) is in Conjunctive Normal Form (CNF) if it has the form:
\[
\phi = (x \lor \neg y \lor z) \land (\neg x \lor \neg s \lor z \lor u) \land ... \land (\neg z \lor \neg u)
\]
\begin{itemize}
\item \textbf{Literal:} a variable \(\neg x\) or a negated variable
\label{sec:orgae477f6}
\item \textbf{Clause:} an OR of the literals.
\label{sec:org7a81f21}
\item \textbf{CNF:} an AND of the clauses.
\label{sec:org72a8d1a}
\item \textbf{3CNF:} a CNF with exactly 3 literals in each clause.
\label{sec:orgdd01e20}
\end{itemize}

\subsubsection{SAT}
\label{sec:orga67d26a}
\textbf{Boolean satisfiability problem} (SAT) is the problem of determining if there exists an interpretation that satisfies a given Boolean formula.
In other words, it asks whether the variables of a given Boolean formula can be consistently replaced by the values (TRUE or FALSE) in such a way that the formula evaluates to TRUE.
\subsubsection{3SAT is the satisfatory problem restricted to 3CNF formulas}
\label{sec:orgbcee8b0}

\[
3SAT = \{ \phi \mid \phi \text{ is a satisfiable 3CNF formular} \}
\]

\subsubsection{Theorem: 3SAT \(\le p \) K-CLIQUE}
\label{sec:orgcdc8b1b}
We will show that we can reduce 3SAT to K-CLIQUE in polynomial time by building a model on 3SAT.
And hence to show that K-CLIQUE problem is also NP-complete.

\paragraph{The K-Clique Problem}

A k-clique in a graph is a subset of k nodes all directly connected by edges, the input of k-clique problem is an undirected graph and k. The output is a clique (closed) with k vertices, if one exists.

\[
\text{K-CLIQUE} = \{ \langle G, k \rangle \mid \text{ graph G contains a k-clique} \}
\]

\paragraph{The K-Clique Problem is in NP:}

You can easily verify that a graph has a k-clique by exhibiting the clique.

\newpage

\begin{theorem}[\textbf{3SAT \(\le p \) K-CLIQUE} \cite{snp}]
Given polynomial-time reduction \(f\) that maps \(\phi\) to \(\langle G,k \rangle\) where \(\phi\) is satisfiable if and only if G has a k-clique. Given the structure of a CNF, a satisfying assignment to a CNF formula has \(\ge 1\) true literal in each clause.
\end{theorem}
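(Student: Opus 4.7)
The plan is to exhibit an explicit polynomial-time Karp reduction $f$ that sends a 3CNF formula $\phi$ with $k$ clauses to a pair $\langle G, k \rangle$, where the clique target is exactly the number of clauses, and then verify the ``iff'' in both directions. The guiding intuition is that a satisfying assignment must pick at least one true literal per clause, so if we can arrange the graph so that a $k$-clique must select exactly one literal from each clause in a mutually consistent way, the two problems line up.

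First I would describe the construction. For each clause $C_i = (\ell_{i,1} \lor \ell_{i,2} \lor \ell_{i,3})$, I introduce three vertices $v_{i,1}, v_{i,2}, v_{i,3}$ in $G$, one per literal, yielding $3k$ vertices total. I then place an edge between $v_{i,a}$ and $v_{j,b}$ exactly when (i) $i \neq j$, so the endpoints belong to different clauses, and (ii) $\ell_{i,a}$ and $\ell_{j,b}$ are not complementary literals (i.e.\ they are not $x$ and $\neg x$ for the same variable). Constructing all $O(k^2)$ edges and checking the two conditions takes time polynomial in $|\phi|$, so $f$ is a polynomial-time reduction.

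Next I would prove the two directions of equivalence. For the forward direction, given a satisfying assignment, each clause $C_i$ contains at least one true literal; pick any one such literal per clause and take the $k$ corresponding vertices. These vertices come from pairwise distinct clauses and, being simultaneously true under the assignment, cannot include a pair of complementary literals, so every pair is joined by an edge, giving a $k$-clique. For the reverse direction, suppose $G$ has a $k$-clique $S$. Since vertices within the same clause have no edge between them, $S$ must contain exactly one vertex from each of the $k$ clauses. Because no edge in $S$ joins complementary literals, the literals selected by $S$ are mutually consistent, so one can set each underlying variable so that all selected literals evaluate to true; this assignment satisfies every clause and hence $\phi$.

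The most delicate step will be the reverse direction, specifically the argument that the $k$-clique hits every clause exactly once and is conflict-free; once that structural observation is in place, converting the clique into a satisfying assignment is immediate. Since \textsc{K-CLIQUE} was already shown to lie in \textbf{NP}, combining this reduction with the Cook--Levin theorem (which yields $3\textsc{SAT} \leq_p L$ for every $L \in \textbf{NP}$-complete, and in particular $3\textsc{SAT}$ is \textbf{NP}-complete via standard clause-width reductions from \textsc{SAT}) will let us conclude that \textsc{K-CLIQUE} is \textbf{NP}-complete as an immediate corollary.
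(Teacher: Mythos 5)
Your proposal is correct and uses essentially the same construction and argument as the paper: one vertex per literal occurrence, edges forbidden within a clause and between complementary literals, $k$ set to the number of clauses, and the same two-direction equivalence (pick one true literal per clause for the forward direction; use the ``exactly one vertex per clause, no conflicts'' structure of any $k$-clique for the reverse). Your write-up is a bit more explicit than the paper's about why a $k$-clique must hit each clause exactly once and about extending the partial literal assignment to a full variable assignment, but these are presentational refinements of the same proof rather than a different approach.
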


\begin{definition}[G]\label{G}
G: Assume each literal in the formula is a node in G, where:
\begin{itemize}
    \item The forbidden edges:
    \begin{enumerate}
    \item No edges within a clause.
    \item No edges that go between inconsistent labels (\(a\) and \(\neg a\) )
    \end{enumerate}
    \item G has all non-forbidden edges.
    \begin{itemize}
    \item k is the number of clauses
    \item Other than those forbidden edges, all other edges are connected.
    \end{itemize}  
\end{itemize}
\end{definition}

\noindent Claim: \textbf{\(\phi\) is satisfiable iff \(G\) has k-clique}.
We will show that we can reduce 3SAT to K-CLIQUE in polynomial time by \textbf{constructing a model based on 3SAT}.
\[
\phi = (a \lor b \lor \neg c) \land (\neg a \lor b \lor d) \land (a \lor c \lor \neg e) \land ... \land (\neg x \lor y \lor \neg z)
\]

\begin{proof}
\textbf{If \(\phi\) is satisfiable \( \implies \) \(G\) has a k-clique.} \cite{snp}We begin by taking any satisfying assignment to \(\phi\), pick 1 true literal in each clause. assuming that 3SAT is solvable. Then the corresponding nodes in G are a k-clique:
\begin{enumerate}
    \item \textbf{There are at least k nodes:} Assign any node to 1 means a valid node G.
    \item \textbf{No forbidden edges among them}: based on Definition \ref{G}, those nodes on different clauses have edges connected, and since all the labels came from the same assignment. (a is true then \(\neg a\) is false, we cannot pick the inconsistent nodes in different clauses)
\end{enumerate}
\end{proof}

\begin{proof}
\textbf{If G has a k-clique\( \implies \) it will make \(\phi\) satisfiable.} \cite{snp}Taking any k-clique in G. It must have 1 node in each clause, because when we construct 3SAT from given G, nodes cannot appear in a clique together, since there are k clauses, each clause must have exactly one node to form a k-clique graph.
\end{proof}

\textbf{Setting each corresponding literal TRUE gives a satisfying assignment to \(\phi\).}, the reduction \(f \) is computable in polynomial time, which suggests that: \textbf{If k-clique can be solved in polynomial time, then 3SAT can be solved in polynomial time.} Conversely, a polynomial-time solution to 3SAT implies that all \(\mathsf{NP}\) problems, including K-clique, are in P.

\subsection{The Cook-Levin Theorem}\label{cook-levin}
\label{sec:orga50a635}
Once we have one NP-complete problem, we may obtain others by polynomial time reduction from it, as we've seen in K-CLIQUE.
However, establishing the first NP-complete problem is more difficult. Now we do so by proving that SAT is NP-complete.
In 1971, Stephen Cook states that the \textbf{Boolean satisfiability problem is NP-complete} \cite{DBLP:conf/stoc/Cook71}.
That means any problem in \(\mathsf{NP}\) can be reduced in polynomial time by a deterministic Turing Machine to the boolean satisfiability problem (SAT).

\newpage

\subsubsection{SAT is in NP}
\label{sec:orge6fa5f0}
A nondeterministic polynomial time machine can guess an assignment to a given formula \(\phi\) and accept if the assignment satisfies \(\phi\).
\subsubsection{For each A in NP, we have A \( \leq p \) SAT:}
\label{sec:org28bf6b3}
(Any language in \(\mathsf{NP}\) is polynomial time reducible to SAT).

\paragraph{Proof Idea:}
\label{sec:org827b134}

\cite{DBLP:books/daglib/0086373}Let N be a nondeterministic Turing machine that decides A in \(n^k\) time for some constant k.
\emph{We are trying to proof that for any w belongs to any \(\mathsf{NP}\) problems, there is a polynomial time reduction procedure that can transform that w to \(\phi(\text{SAT})\)}

\paragraph{Key to the Proof:}
\label{sec:org7620931} For any \( w \) belongs to any \(\mathsf{NP}\) problems, it can be determined in polynomial time by a nondeterministic Turing machine N, say the running time is \(n^k\).
Then we can construct a Tableau for N is an \(n^k \times n^k\) table whose rows are the configurations of a branch of the computation of N on input w.
Which represents the computation steps/history of that branch of NTM (N). Based on this Tableau, by carefully define each part of \emph{Phi}:
\[
\phi = \phi_{cell} \land \phi_{start} \land \phi_{move} \land \phi_{accept}
\]
We can show that the construction time is is \(O(n^{2k})\), the size of \(\phi\) is polynomial in n.
Therefore we may easily construct a reduction that produces \(\phi\) in polynomial time from the input w of any \(\mathsf{NP}\) problem.

\section{Interactive Proof Systems}
\label{sec:orgf2c105b}
\subsection{The Limitation of \(\mathsf{NP}\) Proof Systems}
\label{sec:orgf5bb5e4}
Let's recall Stephen Cook and Leonid Levin's influential model definition of NP: \cite{DBLP:conf/stoc/Cook71}
The \(\mathsf{NP}\) proof-system consists of two communicating deterministic Turing machines A and B: respectively, the \emph{\(P\)} and the \emph{\(V\)}.
Where the \emph{\(P\)} is \textbf{exponential-time}, the \emph{\(V\)} is \textbf{polynomial-time}.
They read a common input and interact in a very elementary way.
On input a string \(x\) belonging to an \(\mathsf{NP}\) language L, A computes a string \(y\) (whose length is bounded by a polynomial in the length of \(x\) ) and writes \(y\) on a special tape that B can read.
B then checks that \(f_l(y) = x\) (where \(f_l\) is a polynomial-time computable function) and, if so, it halts and accepts.
\subsubsection{\(f_l(y) = x\)}
\label{sec:org21b269b}
We can understand \(f_l(y) = x\) in this way: "The output (certificate) \(y\) belongs to the input x, where \(f_l()\) is a function that can check that \(y\) in poly-time."
\subsubsection{Formalization vs. Intuition}
\label{sec:org2c32467}
Sometimes formalization cannot entirely capture the inituitive notions.
In the context of \textbf{theorem-proving}:
NP captures a specific form of proving a theorem, where a proof can be "written down" and verified without interacting with \(P\)~\cite[Section 3]{DBLP:conf/stoc/GoldwasserMR85}
The certificates, which is like a formal written proof, and \(V\) just passively checks it.
This is like reading a proof in a book. Once you have the book, there is no back-and-forth to clarify or ask questions about the proof.

\subsubsection{Example: Co-HAMPTATH Problem}
\label{sec:org6e9373e}
As we mentioned in the previous section, we do not know whether the complement of \textsc{HAMPATH} (Co-HAMPATH) is in NP:
\begin{itemize}
\item y is easy to be verified: \textbf{HAMPATH}

For the Hamiltonian Path (HAMPATH) problem, given a solution (i.e., a path), it's easy for a \(V\) to check it in polynomial time.

\(P\) can just present the path (certificates, or \(y\)),
and \(V\) checks whether it's a valid Hamiltonian path (i.e., visits each vertex exactly once and satisfies the graph's edges).

\item y is hard to be verified: \textbf{Co-HAMPATH}

The Co-HAMPATH problem asks whether a graph does not have a Hamiltonian path. Here, proving the non-existence of something becomes far more complex.

If you ask a \(P\) to convince you that no Hamiltonian path exists, the proof isn't as simple as just pointing to something (like a path).
Instead, you'd need to somehow verify all possible paths don't work, which could take \textbf{exponential time}.
\end{itemize}
\subsubsection{Limitation of the \(\mathsf{NP}\) Proof-System}\label{Limitation of the NP Proof-System}
\label{sec:org5fd113e}
In the \(\mathsf{NP}\) model, some problems in  \(\mathsf{NP}\)  (like HAMPATH) are easily verifiable, but \(\mathsf{NP}\) does not capture the complexity of some other problems (often their complements i.e., Co-NP problems).
That's why problems like Co-HAMPATH are much harder to verify using the static  \(\mathsf{NP}\)  model:
\textbf{In our example, there isno easy way for a \(P\) to present a simple "proof" that no Hamiltonian path exists, and for \(V\) to check it efficiently.}

\subsection{The Interactive Proof Systems}
\label{sec:orgee95a38}
In 1985, Goldwasser et al. \cite{DBLP:conf/stoc/GoldwasserMR85} introduced an interactive proof-systems to capture a more general way of communicating a proof.

Much like in computation, BPP \cite[Section 10.2.1: The class BPP, pp. 336–339]{DBLP:books/daglib/0086373} (Bounded-error Probabilistic Polynomial time) algorithms provide a probabilistic analog to P to enhance efficiency while accepting a small chance of error for that speed.
In verification, \textbf{IP (Interactive Proof) systems provide a way to define a probabilistic analog of the class NP}.
IP includes problems not known to be in NP, demonstrating greater verification power due to \emph{randomness} and \emph{interaction}.

\subsubsection{Interactive Pairs of Turing Machines}
\label{sec:orgd055bfc}
\begin{itemize}
\item \textbf{\(P\) (Prover)}

An entity with unlimited computational power, aiming to convince \(V\) the truth of a statement.
\item \textbf{\(V\) (Verifier)}

A probabilistic polynomial-time Turing machine (with a random tape) that interacts with \(P\) to verify the statement's validity.
\end{itemize}
\subsubsection{Interactions}
\label{sec:org3cf5d94}
The interaction consists of multiple rounds where \(P\) and \(V\) exchange messages.
\(V\) uses randomness to generate challenges, and \(P\) responds accordingly.
The key properties of such systems are:
\begin{itemize}
\item \textbf{Completeness:}
If the statement is true, an honest \(P\) can convince \(V\) with high probability.
\item \textbf{Soundness:}
If the statement is false, no cheating \(P\) can convince \(V\) except a small probability.
\end{itemize}
\subsubsection{Example: Quadratic Nonresidue Problem}
\label{sec:org597f587}

An integer \(a\) is a quadratic residue modulo \(n\) if there exists an integer \(x\) such that:
 \[
    x^2 \equiv a \pmod{n}
    \]
 An integer \(a\) is a quadratic nonresidue modulo \(n\) if no such integer \(x\) exists.
 Suppose A (Prover) claim that \(a\) is a \textbf{quadratic nonresidue}, and the B (Verifier) wants to check that using an \textbf{Interactive Proof System}.

\paragraph{An Interactive Proof System to the QAP can be:} B begins by choosing m random numbers \(\{r_1,r_2,...,r_m \}\). For each \(i\), \(1 \leq i \leq m\), he flips a coin:
\begin{itemize}
\item If it comes up heads he forms \(t = r_i^2 \pmod{m}\).
\item If it comes up tails he forms \(t = a \times ri^2 \pmod{m}\).
\end{itemize}

Then B sends \(t_1, t_2,..., t_m\) to A, who having unrestricted computing power, finds which of \(t_i\) are quadratic residues, and uses this information this information to tell B the results of his last m coin tosses. If this information is correct, B accepts.

\paragraph{Why this Will Work?}
\begin{itemize}
\item If \(a\) is really a quadratic nonresidue:

According to the property of quadratic nonresidue:
\begin{itemize}
\item \(t = a \times r_i^2 \pmod{m}\) is a quadratic non-residue.
\item \(t = r_i^2 \pmod{m}\) is a quadratic residue.
\end{itemize}
\(P\) can distinguish which side of the coin by looking whether t is a quadratic nonresidue or residue.

\item If \(a\) is a quadratic residue (\(P\) is lying):

Then both \(t = a \times r_i^2 \pmod{m}\) and \(x = r_i^2 \pmod{m}\) are quadratic residues.

Which means \(t_i\) are just random quadratic residues, all \(t_i\) looks the "same" for \(P\) to guess the coin side, \(P\) will respond correctly in the last part of the computation with probability \(1/2^m\).
\end{itemize}

\subsection{The Power of Interactive Proof Systems}
As mentioned in Section \ref{Limitation of the NP Proof-System}, While traditional  \(\mathsf{NP}\)  proof systems are powerful for verifying the existence of solutions, they encounter significant limitations when it comes to proving the non-existence of solutions. Specifically,  \(\mathsf{NP}\)  proof systems are inherently designed to provide short certificates for \textit{yes}-instances of decision problems. However, for \textit{no}-instances, such as those in the class \(\mathsf{coNP}\), no analogous short certificates are known. A quintessential example is the \textsc{Co-Hamiltonian Path} problem (\textsc{Co-HAMPATH}), where we seek to verify that a given graph does not contain a Hamiltonian path. Currently, it remains unknown whether \textsc{Co-HAMPATH} resides in  \(\mathsf{NP}\) , primarily because we lack efficient methods to certify the absence of a Hamiltonian path.

\begin{theorem}[\cite{10.1145/146585.146609}]
\(\mathsf{coNP} \subseteq \mathsf{IP}\)
\end{theorem}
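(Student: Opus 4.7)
The plan is to exhibit an interactive proof for a $\mathsf{coNP}$-complete language and then lift it to the rest of the class via polynomial-time reductions. Since SAT is $\mathsf{NP}$-complete by the Cook-Levin theorem (Section~\ref{cook-levin}), its complement UNSAT is $\mathsf{coNP}$-complete, so it suffices to construct an interactive protocol in which $P$ convinces $V$ that a given 3CNF formula $\phi(x_1,\ldots,x_n)$ has no satisfying assignment. In fact, I would aim for the stronger claim that counting satisfying assignments ($\#$SAT) admits an interactive proof, with UNSAT being the special case in which the claimed count is zero.

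The central idea is arithmetization. Working over a prime field $\mathbb{F}_p$ with $p$ of polynomially bounded bit-length, send each variable $x_i$ to a field element, each negation $\lnot x$ to $1-x$, each conjunction to multiplication, and each disjunction $a\lor b$ to $a+b-ab$. The resulting polynomial $p_\phi:\mathbb{F}_p^n\to\mathbb{F}_p$ has total degree $O(|\phi|)$ and agrees with the truth value of $\phi$ on $\{0,1\}^n$. Hence
\[
\#\phi \;=\; \sum_{b_1\in\{0,1\}}\cdots\sum_{b_n\in\{0,1\}} p_\phi(b_1,\ldots,b_n),
\]
and unsatisfiability is precisely the assertion that this sum equals $0$ in $\mathbb{F}_p$.

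With the problem recast as a polynomial summation, the sum-check protocol developed later in Section~5 is exactly the right machinery. Invoking it on $p_\phi$ with the claimed sum yields completeness automatically. Soundness follows from the Schwartz-Zippel lemma (Section~4) applied round by round: a dishonest $P$ must at some stage transmit an incorrect univariate polynomial, and the random field element chosen by $V$ in that round detects the discrepancy except with probability at most $d/p$, where $d$ bounds the individual-variable degrees of $p_\phi$. Summing over the $n$ rounds gives total soundness error $O(nd/p)$, which can be driven well below $1/3$ by choosing $p$ polynomially large in $|\phi|$.

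The main obstacle I anticipate is not in any individual round but in the verifier's closing move: at the end of sum-check, $V$ must independently evaluate $p_\phi(r_1,\ldots,r_n)$ at a random point without any assistance from $P$. In general IP constructions this is precisely where things get delicate, and it is what motivates the GKR machinery of Section~6. Here, however, we are rescued by the fact that $p_\phi$ is presented as an arithmetic circuit of size polynomial in $|\phi|$, so the verifier can evaluate it unaided in polynomial time. Combining completeness, soundness, and the polynomial-time reduction from an arbitrary $\mathsf{coNP}$ language to UNSAT then yields $\mathsf{coNP} \subseteq \mathsf{IP}$.
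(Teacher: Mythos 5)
Your proposal is correct and follows essentially the same route as the paper: reduce $\mathsf{coNP}$ via Cook--Levin to the claim $\#\phi = 0$, arithmetize $\phi$ gate-by-gate (using the identical rules $\lnot x \mapsto 1-x$, $\land \mapsto \times$, $a \lor b \mapsto a+b-ab$), run sum-check on the resulting low-degree polynomial, and invoke Schwartz--Zippel for round-by-round soundness, with the observation that $V$ can perform the final evaluation unaided because the arithmetized formula is a polynomial-size circuit. The paper packages this as ``$\#\text{SAT} \subseteq \mathsf{IP}$'' with a slightly more explicit accounting of the degree, communication, and verifier time, but the substance is the same.
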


This theorem, established by Shamir in 1990, reveals a profound capability of interactive proof systems: they can handle the complements of  \(\mathsf{NP}\)  problems efficiently. In other words, for every problem in \(\mathsf{coNP}\), there exists an interactive proof system where a \(P\) can convince a \(V\) of the truth of a statement without \(V\) needing to check an exhaustive list of possibilities. This inclusion signifies that interactive proof systems transcend the limitations of traditional  \(\mathsf{NP}\)  proof systems by enabling the verification of \textit{no}-instances through interactive protocols.

The detailed proof of this theorem leverages the \textit{sum-check protocol}, a pivotal technique in interactive proofs that facilitates the verification of complex statements through a series of interactive rounds between \(P\) and \(V\). For an in-depth exploration of the sum-check protocol and its role in proving that \(\mathsf{coNP} \subseteq \mathsf{IP}\), we refer the reader to Section~\ref{coNP-IP}.

\paragraph{In Practice:}
By assuming \(P\) has unlimited computing power, The theoretical model introduced in Interactive Proof Systems can describe many languages that cannot be captured using \(\mathsf{NP}\) model.
But back to practice, seems that this model will only work in theory until those kinds of unlimited computing machine comes in real life. (Can determine \(\mathsf{NP}\) problems in polynomial Time).
(i.e., Can quickly determine \(\mathsf{NP}\) problems like whether t is a quadratic nonresidue or residue)
Is that true?

\subsection{Secret Knowledge}
\label{sec:org811680d}
It is true that having unlimited computing machine is infeasible in current practice.
However, by assuming \(P\) runs in polynomial time with some "secret knowledge" that can help it communicating with \(V\) efficiently.
It can convince \(V\) that \(P\) has that "secret knowledge" without revealing it.
\subsubsection{Eyewitness \& Police Officer}
\label{sec:orgc4547a0}
Let us try to illustrate the above ideas using an informal example:
Assume that a crime \(x\) has happened, B is a police officer and A is the only eyewitness.
A is greedy in telling B that to tell him about what happened in \(x\), \$100,000 must be transferred to his bank account first.
For B, it is important to verify whether A has that "secret knowledge" -- details of crime \(x\) before making transfer.
And for obvious reasons, A cannot just prove that he has that "secret knowledge" by telling it directly to the police officer B.
By using interactive proof systems, A can convince B that he has that "secret knowledge" \(x\) without revealing it.
\subsubsection{Quadratic Nonresidue Problem}
\label{sec:orgebcccd9}
In the interactive proof system describe in Quadratic Nonresidue Problem, the key challenge for \(P\) (A) is to determine whether each number \(t_i \) sent by \(V\) (B)
is a quadratic residue or a quadratic nonresidue modulo m. This determination is crucial because it allows \(P\) to infer the results of B's coin tosses and respond correctly.
In this case, the "secret knowledge" for efficient computation on A is the \textbf{prime factorization of the modulus \(m\)}\footnote{Note: In practice, the modulus m is often chosen such that its factorization is hard to obtain (e.g., a product of two large primes),
ensuring that without the secret knowledge, determining quadratic residuosity remains difficult.}.
If \(P\) knows the prime factors of \(m\), they can efficiently compute the \emph{Legendre}\footnote{Legendre, A. M. (1798). Essai sur la théorie des nombres. Paris. p. 186.} or \emph{Jacobi}\footnote{Jacobi, C. G. J. (1837). "Über die Kreisteilung und ihre Anwendung auf die Zahlentheorie". Bericht Ak. Wiss: 127–136.} symbols to determine quadratic residuosity.
This "secret knowledge" enables polynomial \(P\) (A) to interact with B that are otherwise computationally infeasible.
\[
    (\text{"secret knowledge"} + \text{poly-time machine} \equiv \text{unlimited computing power} )
\]

\subsection{The Knowledge Complexity}
\label{sec:org1500878}
\subsubsection{The Knowledge Computable from a Communication}
\label{sec:org8ec6d03}
Which communications (interactions) convey knowledge?
\cite[Section 3.2]{DBLP:conf/stoc/GoldwasserMR85} Informally, those that transmit the output of an unfeasible computation. How to ensure that \(V\) gains no secret knowledge beyond the validity of the statement being proven?
\subsubsection{Knowledge Complexity}
\label{sec:org2120722}

\paragraph{Simulator:}

\cite{DBLP:conf/stoc/GoldwasserMR85} introduce the idea of \textbf{simulator}: An algorithm that can generate transcripts of the interaction without access to \(P\)'s secret information.

\paragraph{Quantifying Knowledge}

By linking the \textbf{chance} that \(V\) is able to distinguish the simulator, we quantify the knowledge complexity of proofs.
If the simulator can effectively replicate the interaction, making it indistinguishable to \(V\), \textbf{the knowledge complexity} is considered zero.

This formalization allows us to assess and prove the zero-knowledge property of certain interactive proofs by showing that sometimes \(V\) cannot gain knowledge because whatever it sees could be simulated without \(P\)'s help.

The Simulator acts as an algorithm that can generate transcripts of the interaction between \(P\) (A) and \(V\) (B) without \textbf{knowing \(P\)'s secret knowledge}.
By producing transcripts that are indistinguishable from those of a real interaction, the simulator demonstrates that \(V\) gains no additional knowledge from the interaction.
Therefore, if such a simulator exists, we say that the proof is zero-knowledge because any information \(V\) receives could have been simulated without \(P\)'s secret.

\subsubsection{Zero Knowledge Interactive Proof System for the QRP}
\label{sec:org4e566b5}
\cite[Section 4.2]{DBLP:conf/stoc/GoldwasserMR85} introduces a zero knowledge IP system for the quadratic residue problem by carefully designing the protocol and demonstrating the existence of a poly-time simulator. The difficulties of the proof is that M must compute the coin tosses correctly as a real \(P\) (A) with secret knowledge does. Since the simulator M simulates both sides of the interaction, it both can know/control the randomness of the coin.

\newpage

\section{The Power of Low-Degree Polynomials}
\label{sec:org9da4373}

This section of the survey builds upon the concepts presented by Justin Thaler \cite{thalerppt} during the Proofs, Consensus, and Decentralizing Society Boot Camp in 2019. Thaler's insightful discussion on the power of low-degree polynomials in verifiable computing serves as the backbone for the detailed explanations and proofs provided herein.

\subsection{Example: Equality Testing}\label{equality}
\label{sec:org6f7440e}
Two parties (i.e., Alice and Bob) each have an equal-length binary string:
\[
a = (a_1, a_2, ..., a_n) \in \{0, 1\}^n \mid  b = (b_1, b_2, ..., b_n) \in \{0, 1\}^n.
\]
They want to collaborate with each other (No malicious user) to determine whether \(a \equiv b\), while exchanging as few bits as possible.

\subsubsection{A trivial solution}

Alice sends a to Bob, who checks whether \(a \equiv b\). The communication cost is \(n\), which is optimal amongst deterministic protocols.

\subsubsection{A logarithmic cost randomized solution}

According to \cite[Section 2.3]{justin}, let \(F\) be any finite field with \(|F| \geq n^2\), then we interpret each \(a_i, b_i\) as elements of \(F\): Let \(p(x) = \sum_{i=1}^{n} a_i \times x^i\) and \(q(x) = \sum_{i=1}^{n} b_i \times x^i\)
\begin{enumerate}
\item Alice picks a random \(r\) in \(F\) and sends \((r, p(r))\) to Bob.
\item Bob calculates \(q(r)\), outputs EQUAL iff. \(p(r) \equiv q(r)\), otherwise output NOT-EQUAL.
\begin{itemize}
\item \textbf{Total Communication Cost: \(\mathcal{O}(\log n)\) bits}

Since there are at least total \(n^2\) elements in \(F\), to represent each of the elements, we need \(\log (|F|) = \log (n^2) = 2 \log (n) = \mathcal{O}(\log n)\) bits.
\end{itemize}
\item If \(a \equiv b\):
Then Bob outputs EQUAL with probability 1
\item If \(a \neq b\):
Then Bob outputs NOT-EQUAL with probability at least \((1 - \frac{1}{n})\) over the choice of \(r\) in \(F\).
\end{enumerate}

A detailed proof of this statement will be given in the section \ref{low-degree-polynomials}.

\subsection{Low-degree Polynomials}\label{low-degree-polynomials}
\label{sec:org504c80a}

\subsubsection{Field}
Field arises from the need for a structured and versatile system in mathematics and science to perform algebraic operations in a consistent and predictable way.

A field is a set equipped with two operations, addition and multiplication, along with their respective inverses, subtraction and division (except division by zero). Operations in a field follow specific rules, such as commutativity, associativity, and distributivity, ensuring that the result of any operation between elements of the field remains within the field itself (closure). Here are some non-field examples:
\begin{itemize}
\item The set of 2x3 matrices cannot perform multiplication.
\item The set of 2x2 matrices, multiplication between any two elements is not commutative.
\item Some of the elements in set Z/6Z (integers modulo 6) do not have multiplicative inverses.

A number a in Z/6Z has a multiplicative inverse if there exists a number b in Z/6Z such that \(a \times b \equiv 1 \pmod 6\).
For example, 2, 3, 4 do not have a multiplicative inverse, in fact, integers mod p (Z/pZ) is a field when p is a prime number.
\end{itemize}

\subsubsection{Reed-Solomon Error Correction}\label{reed}

Since there are total \(n\) bits of \(a\) and \(b\), the lowest-degree polynomials that for us can ensure the uniqueness of representation of each \(a_i\) and \(b_i\) is \(n\).
Which means each bit \(a_i\) (and correspondingly \(b_i\)) is uniquely represented as the coefficient of a distinct term in the polynomial and affects the polynomial differently. And that is why we need to define \(p(x) \) and \(q(x) \) in the following way for error detection in equality testing:

\[
p(x) = \sum_{i=1}^{n} a_i \times x^i
\]
\[
q(x) = \sum_{i=1}^{n} b_i \times x^i 
\]

\begin{theorem}
    Any non-zero polynomial \(d(x)\) of degree \(n\) has at most \(n\) roots
\end{theorem}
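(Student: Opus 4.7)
The plan is to prove the statement by induction on the degree $n$, with the factor theorem as the main algebraic engine. The induction moves from degree $n-1$ to degree $n$ by peeling off one root at a time, which matches the bound on the right-hand side exactly. Because the ambient setting is a field $F$ (as emphasized in the surrounding discussion), I can freely use the fact that $F$ has no zero divisors, which is essential for the counting to work out.

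The base case handles $n = 0$: a non-zero polynomial of degree $0$ is a non-zero constant, which has no roots in $F$, so the bound $0 \leq 0$ holds trivially. For the inductive step, I assume the claim for all non-zero polynomials of degree $n-1$ and consider a non-zero $d(x)$ of degree $n$. If $d$ has no roots, then $0 \leq n$ and we are done. Otherwise, fix any root $r \in F$; I will argue that $d(x)$ factors as $(x - r)\, q(x)$ for some $q(x)$ of degree exactly $n - 1$, and then apply the inductive hypothesis to $q$.

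The key step is the factor theorem, which I would derive from polynomial long division: dividing $d(x)$ by the degree-one polynomial $(x - r)$ yields $d(x) = (x - r)\, q(x) + c$ for some constant remainder $c \in F$ and some quotient $q(x)$ of degree $n - 1$. Substituting $x = r$ gives $c = d(r) = 0$, so $d(x) = (x - r)\, q(x)$. Note that $q$ is non-zero, since otherwise $d$ itself would be zero, contradicting the hypothesis. Now for any root $s$ of $d$ with $s \neq r$, the equation $(s - r)\, q(s) = 0$ forces $q(s) = 0$, using that $F$ has no zero divisors and $s - r \neq 0$. Hence every root of $d$ is either $r$ or a root of $q$, giving at most $1 + (n - 1) = n$ roots in total by the inductive hypothesis on $q$.

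The main obstacle, if any, is making sure the polynomial division step is justified in full generality over an arbitrary field and that no zero-divisor trap is hidden in the counting. Both issues are addressed by the explicit appeal to field axioms: division with remainder works for any field because leading coefficients are always invertible, and the no-zero-divisors property of fields is exactly what lets me conclude that roots of $d$ outside $\{r\}$ must be roots of $q$. Once these two points are clean, the induction closes immediately and the bound $n$ is tight witnessed by $d(x) = \prod_{i=1}^{n}(x - r_i)$ for distinct $r_i \in F$.
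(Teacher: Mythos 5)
Your proof is correct, and it takes a more explicit route than the paper's. The paper argues by contradiction: it assumes \(n+1\) distinct roots \(r_1,\dots,r_{n+1}\), asserts that \(d(x)\) then factors as \((x-r_1)\cdots(x-r_{n+1})\,q(x)\), and derives a contradiction from the degree count. You instead proceed by induction on \(n\), explicitly deriving the factor theorem from polynomial division with remainder and invoking the no-zero-divisors property of the field \(F\) to conclude that any root distinct from \(r\) must be a root of the quotient \(q\). The two arguments share the same algebraic core---peeling off linear factors one root at a time---but the paper's ``then the polynomial can be written as'' step tacitly bundles an iterated application of exactly the division-plus-no-zero-divisors reasoning you spell out, without justifying it. Your version is therefore more self-contained and makes visible why the claim can fail over rings with zero divisors (for instance, \(x^2-1\) has four roots in \(\mathbb{Z}/8\mathbb{Z}\)), a dependence the paper's proof leaves implicit. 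Both are sound; the paper buys brevity, and you buy rigor and an explicit accounting of where the field hypothesis is load-bearing.
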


\begin{proof}
    Assume the polynomial has more than \(n\) roots. Let \(r_1, r_2, ..., r_{n+1}\) be distinct elements of the field \(F\), such that \(d(r_i) = 0\) for each \(i = 1, 2, ..., n+1\). 
Then the polynomial d(x) can be written as:
\[
   d(x) = (x - r_1)(x - r_2) ... (x - r_{n+1})q(x) 
   \]
where \(q(x)\) is some polynomial of degree \(m \geq 0\) and \((x - r_i)\) are factors corresponding to the roots. Then the product of \((x - r_1)(x - r_2) ... (x - r_{n+1})\) is a polynomial of degree \(n + 1\). This assumption leads to a contradiction. Hence the polynomial \(d(x)\) can have at most \(n\) distinct roots.
\end{proof}

\begin{theorem}
    In section \ref{equality}, if \(a \neq b\), then the probability of Bob is wrong is at most \(\frac{1}{n}\)
\end{theorem}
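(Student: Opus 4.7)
The plan is to reduce the question to an application of the preceding theorem on the number of roots of a non-zero polynomial. Define the difference polynomial $d(x) = p(x) - q(x) = \sum_{i=1}^{n}(a_i - b_i)x^i$. Bob outputs the wrong answer precisely when he outputs EQUAL even though $a \neq b$, i.e., precisely when $p(r) = q(r)$, or equivalently when $d(r) = 0$. The whole proof therefore reduces to bounding the probability that the uniformly random $r \in F$ is a root of $d$.

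First I would argue that $d$ is a non-zero polynomial. Since $a \neq b$ as binary strings, there is at least one index $j$ with $a_j \neq b_j$, so the coefficient $a_j - b_j$ of $x^j$ in $d(x)$ is non-zero (here it is important that the $a_i, b_i \in \{0,1\}$ are interpreted as elements of $F$, and that $F$ has characteristic strictly larger than $1$, which is automatic for any field). Second, I would observe that $\deg(p), \deg(q) \leq n$, so $\deg(d) \leq n$ as well.

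Next, I would invoke the previous theorem: any non-zero polynomial of degree at most $n$ over $F$ has at most $n$ roots in $F$. Since $r$ is drawn uniformly at random from $F$ and $|F| \geq n^2$, the probability that $r$ happens to be among these at most $n$ roots is
\[
\Pr_{r \in F}\bigl[d(r) = 0\bigr] \;\leq\; \frac{n}{|F|} \;\leq\; \frac{n}{n^2} \;=\; \frac{1}{n},
\]
which is exactly the claimed bound on the probability that Bob is wrong.

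There is no real obstacle here: the entire argument is a direct corollary of the root-counting theorem, and the only subtle point worth spelling out carefully is the initial reduction from ``Bob errs'' to ``$r$ is a root of the non-zero polynomial $d$'', together with the remark that $a \neq b$ in $\{0,1\}^n$ genuinely forces $d$ to be non-zero as a polynomial over $F$ (and not merely to vanish at some particular point). Everything else is bookkeeping on degrees and the size of the field.
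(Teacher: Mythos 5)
Your proof is correct and follows essentially the same route as the paper's: form the difference polynomial $d(x) = p(x) - q(x)$, invoke the preceding root-counting theorem, and bound $\Pr[d(r)=0]$ by $n/|F| \leq 1/n$. You spell out one point the paper leaves implicit, namely that $a \neq b$ forces $d$ to be a non-zero polynomial, which is a worthwhile addition but not a different argument.
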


\begin{proof}
In equality testing, if \(p(x) \neq q(x) \), then the chance that Alice picks a random \(r \) in \( F \) such that \(d(r) = p(r) - q(r) \equiv 0 \) is at most \(( \frac{n}{|F|} \leq \frac{n}{n^2} \leq \frac{1}{n})\). The reason is that a n-degree polynomial \(d(r) = p(r) - q(r)\) has at most \(n\) roots, a randomly picked value \(r\) in \(F\) of size \(n^2\) will only let \(d(r)\) equal to zero with a probability \(( \frac{n}{n^2} = \frac{1}{n} )\).
\end{proof}

\subsubsection{Polynomials are Constrained by Their Degree}

As we can see that, A polynomial \(d(x)\) of degree \(n\) over a large field \(F\) is uniquely determined by its values on \(n+1\) distinct points. (one extra constant coefficient with no variable)

If \(p(x)\) is not equal to \(q(x)\), then they can only agree on at most n points (\(d(x) = p(x) - q(x) = 0 )\), meaning they \textbf{differ at most everywhere} on the field. (Any two polynomials of degree \(n\) can agree in at most \(n\) places, unless they agree everywhere.)
This strong divergence is very useful and powerful for error detection.

\paragraph{The Power of Low-degree Polynomials}
In practice, we aim to keep the degree of a polynomial as low as possible while ensuring that each term uniquely affects the polynomial.
A polynomial of degree \(n\) has \(n+1\) terms, each with a distinct power of \(x\) and a unique coefficient, which ensures that every term influences the polynomial differently. Low-degree polynomials are powerful because if two polynomials differ, they will diverge over many points when evaluated multiple times with efficient evaluation.
This makes discrepancies clear across a larger number of evaluations. In cryptography, this property allows for efficient detection of errors or differences, especially when random evaluations are used.

\subsection{Example: Freivalds's algorithm for Verifying Matrix Products}
\label{sec:org64df47b}
Input are two \(( n \times n )\) matrices A, B. The goal is to verify the correctness of \(A \cdot B\). 
The time complexity of matrix multiplication is \(\mathcal{O}(n^3)\), this is because each element in the resulting matrix \(A \cdot B\) is computed by taking the dot product of a row of \(A\) and a column of \(B\).
For each of \(n^2\) elements in the resulting matrix, you perform n multiplications and additions, leading to a total of \(\mathcal{O}(n^3)\) operations.
The best bound of matrix multiplication algorithm for now is \(\mathcal{O}(n^{2.371552})\) \cite{vir}.

\noindent If a \(P\) claims the answer of \(A \cdot B \) is a matrix \(C\)? Can V verify it in \(\mathcal{O}(n^2)\) time?

\paragraph{\(\mathcal{O}(n^2)\) Protocol:\cite{freivalds}}
\begin{enumerate}
\item V picks a random \(r\) in \(F\) and lets \(x = (r, r^2, ..., r^n)\).
\item V computes \(C \cdot x\) and \(A \cdot ( B \cdot x )\), accepting if and only if they are equal.
\end{enumerate}

\paragraph{Runtime Analysis:}

V's runtime dominated by computing 3 matrix-vector products, each of which takes O(n\textsuperscript{2}) time.
\begin{itemize}
\item \(C \cdot x\) is one matrix \((n \times n)\) times a vector \((n \times 1)\), the time complexity is \(\mathcal{O}(n^2)\).

Because each row of B is multiplied by the vector \(x\), requiring \(n\) multiplications and \(n\) additions per row, and there are \(n\) rows.

\item \((A \cdot B) \cdot x = A \cdot (B \cdot x)\) takes two matrix-vector multiplications.

Matrix multiplication is associative, \(B \cdot x\) takes \(\mathcal{O}(n^2)\) first, and produces a \((n \times 1)\) matrix \(M\), then \(A \cdot M\) will also take \(\mathcal{O}(n^2)\).
\end{itemize}

\paragraph{Correctness Analysis:}
\begin{itemize}
\item If \(C \equiv A \cdot B\):

Then V accepts with probability 1
\item If \(C \neq A \cdot B\):

The V rejects with high probability at least \((1 - \frac{1}{n})\).
\end{itemize}

\begin{proof}[Simplified Proof]
Recall that \(x = (r, r^2, ..., r^n)\). So each matrix-vector multiplication is the polynomials we've seen in section \ref{reed} Reed-Solomon Error Correction at \(r\) of the i-th row of C. So if one row of \(C\) does not equal the corresponding row of \(A \cdot B\), the fingerprints for that row will differ with probability at least \(( 1 - \frac{1}{n} )\), causing V to reject w.h.p.    
\end{proof}

\subsection{Function Extensions}
\label{sec:org3368b97}

\subsubsection{Schwartz-Zippel Lemma}
\label{sec:orge739764}

The Schwartz-Zippel Lemma\cite{sch} is an extension of univariate error-detection to multivariate polynomials.
If \(p\) and \(q\) are distinct l-variate polynomials of \textbf{total degree} at most \(d\). Then the same kind of statement holds.
If we evaluate them at randomly chosen inputs, they agree at the probability at most \(\frac{d}{|F|}\).

\paragraph{Total Degree:} The total degree of a polynomial is the maximal of the sums of all the powers of the variables in one single monomial.

\noindent For example: \(\text{deg}(x^2yz^4 - 3y + 4xe^5 - xy^3z^2) = 7\) (first monomial).

\subsubsection{Extensions}
An extension polynomial bridges the gap between a function defined on a discrete set of points and a function defined over a continuous (or larger discrete) domain.

A l-variate polynomial \(g\) over \(F\) is said to extend \(f\) if and only if \(g\) agrees at all of the input where \(f\) is defined. For example: We are given a function \(f\) that maps l-bits binary strings to a field \(F\). This means \(f\) is defined on all possible combinations of \(l\) bits (\(\{0, 1\}^l\)).
But it is only defined on a finite set of points (the binary strings), which usually cannot form a field, \textbf{where we can leverage algegratic tools of polynomials}.

\paragraph{A function \(g\) is said to extend \(f\) if:}

\begin{itemize}
    \item For all \( x \in \{0, 1\}^l \), \( f(x) = g(x) \).
    \item \( g \) is defined on a larger field
\end{itemize}

Let's say \(l = 1\), then \(f\) is defined on input set \(\{0, 1\}\), where \(f(0) = 2\), \(f(1) = 3\). If we want to extend \(f\) to field \(R\) (real numbers), then our objective is to find a polynomial \(g(x)\) in R such that \(g\) agrees with \(f\) on all inputs where \(f\) is initially defined. (i.e., \(g(0) = 2\), and \(g(1) = 3\)).
We can find \(g(x) = x + 2\) defined for all \(x\) in \(R\), not just \(\{0, 1\}\).

By representing \(f\) as a polynomial \(g\), \textbf{we can apply the rich toolbox of algebraic methods and theorems available for polynomials}. For example, Schwartz-Zippel Lemma is more powerful when there is a low-degree extension represents that function.

\subsubsection{Constructing Low-Degree Extensions}\label{lde}
In this section, we present a general way to construct low-degree extensions.

\paragraph{Notations:} 
There is a vector \(w = (w_0, w_1,..., w_{k-1})\) in \(F^k\). \(W: H^m \to F\):
We define a function \(W: H^m \to F\) such that \(W(z) = w_{a(z)}\) if \(a(z) \leq k-1\), and \(W(z) = 0\) otherwise.

\begin{itemize}
    \item \(W\) acts as a way to represent the vector \(w\) as a function over \(H^m\) that for indices corresponding to elements of \(w\), \(W(z)\) returns the corresponding \(w_i\), otherwise \(0\).
    \item \(a(z): H^m \to F\) is the lexicographic order of \(z\), which means \textbf{transforming an m-element vector to an index in the original \(w\)}.
\end{itemize}

\paragraph{Low-Degree Extension \(\widetilde{W}: F^m \to F\):} \(\widetilde{W}\) is an extension of \(W\) input from \(H^m\) to \(F^m\), such that \(\widetilde{W}\) is a polynomial of degree \textbf{at most \(|H|-1\)} in each variable, which enables efficient computation and has nice algebraic properties.

\noindent \textbf{The degree of \(W\) is at most |H|-1 can be understood in the following cases:}

\paragraph{Univariate Case: \(W(x): H \to F\): }
\begin{itemize}
    \item We have \(n = |H|\) distinct points in the subset H.
    \item For each \(h \in |H|\), we want \(\widetilde{W}(x): F \to F)\) to satisfy \(\widetilde{W}(h) \equiv W(h)\).
\textbf{A univariate polynomial of degree at most \((|H|-1)\) can be uniquely determined by its values on those \(n\) points}.
\end{itemize}

\paragraph{Multivariate Case: \(W(x): H^m \to F^m\)}
\begin{itemize}
\item Similarly, we have a m-dimensional grid \(H^m\), where \(H \subseteq F\) and \(|H| = n\).
\item A polynomial \(\widetilde{W}(x_1, x_2,..., x_m)\) (in m variables) that:
\begin{enumerate}
\item Agrees with \(W\) on every point in \(H^m\) (i.e., \(\widetilde{W}(h) = W(h)\) for all \(h \in H^m\).
\item Has degree at most \(|H| - 1\) in each variable.
(i.e., for each variable \(x_i\), the highest exponent of \(x_i\) in \(\widetilde{W}\) is \(\leq |H|-1\).
In other words, just like in the univariate case (where we need \(deg(\widetilde{W}) \leq |H|-1\) to interpolate \(|H|\) points), here each variable is similarly bounded by \(|H|-1\). This ensures \(\widetilde{W}\) can uniquely "pass through" all the points specified by \(W\) on \(H^m\).
\end{enumerate}
\end{itemize}

\noindent The low-degree extension is the \textbf{simplest} polynomial that fits all the given points in \(H^m\), The size of \(H\) determines how "complex" the polynomial needs to be (i.e., degree) in order to pass through all those points without ambiguity.

\paragraph{Here is the full definition of \(\widetilde{W}: F^m \to F\):}

\[
\widetilde{W}(t_1, ..., t_m) = \sum_{i=0}^{k-1} \widetilde{B_i}(t_1, ..., t_m) \cdot w_i.
\]

\paragraph{Indicator Function: \(\widetilde{B_i}: F^m \to F\)}

The polynomials \(\widetilde{B_i}\) act as an indicator functions on \(H^m\), on that field, \(\widetilde{B_i}(z) = 1\) if and only if \(i \equiv a(z) = a(t_1,...,t_m)\). Otherwise \(\widetilde{B_i}(z) = 0\).

Outside \(H^m\) (in \(F^m / H^m\), \(\widetilde{B_i}\) takes on values determined by its polynomial extension, which means when input is outside \(H^m\), \(\widetilde{W}\) can have sum of multiple terms.

\paragraph{Sum Selection: \(\widetilde{W}: F^m \to F\)}

\(\widetilde{W}\) is a low-degree polynomial, summing all possible \(k\) indexes \(i\) from \(0\) to \(k-1\), in the field of \(H^m\).
According to the definition of \(\widetilde{B_i}\) there will be only one "selected" corresponding \(w_i\) which is equal to \(\widetilde{W}\).

\paragraph{Lagrange Basis Polynomial: \(\widetilde{B}(z, p): F^m \to F\)}\label{lagrange}

Also, we can express \(\widetilde{W}(t_1, ..., t_m)\) as:

\[
\widetilde{W}(z) = \sum_{p \in H^m} \widetilde{B}(z, p) \cdot W(p)
\]
  Which constructs the polynomial \(\widetilde{W}\) by summing the contributions from all basis polynomials \(\widetilde{B}(z, p)\), each weighted by \(W(p)\).
\begin{itemize}
\item \(\widetilde{B}(z, z) = 1\), \(\widetilde{B}(z, p) = 0\) for all \(z \in H^m\) where \(z \neq p\)
\item When \(z\) outside \(H^m\), \(\widetilde{B}(z, p)\) can be other values
\end{itemize}

This means, when \(z\) is in \(F^m / H^m\), each \(\widetilde{B}(z, p)\) is a polynomial in \(z\) and can be evaluated at any \(z\) in \(F^m\). And To compute \(\widetilde{W}(z)\), which means by summing over all \(W(p) \in H^m\) that has valid \(\widetilde{B}(z, p)\).
Since \(W\) is only defined in the field \(H^m\). And the low-degree polynomial property still holds.
By enlarging the input field \((F^m)\) while keeping the degree of \(\widetilde{W}\) low, this way of constructing LDE \(\widetilde{W}\) helps us effectively using the power of randomness to detect the potential error.

\subsubsection{Multilinear Extensions}

\label{sec:org9c8057d}
A multilinear extension of a function \(f: \{0, 1\}^n \to F\) (where \(F\) is a finite field) is a polynomial \(\bar{f}: F^n -> F\) that agrees with \(f\) on \(\{0, 1\}^n\) and is \textbf{multilinear}, which means each variable \(x_i\) in \(\bar{f}\) has a degree at most 1, which make them highly effective to evaluate. (a univariate \(f\) by make other variables constants will become a linear function).

And since multilinear polynomials have minimal degree, the error detection probability is maximized for a given field size.

\section{The Sum-Check Protocol}
\label{sec:org46e8bf2}
Suppose given a \(l\)-variate polynomial \(g\) defined over a finite field \(F\). The purpose of the \textbf{sum-check protocol} \cite{89518} is to compute the sum:
\[
  H := \sum_{b_1 \in \{0, 1\}} \sum_{b_2 \in \{0, 1\}} ... \sum_{b_l \in \{0, 1\}} g(b_1, b_2,...,b_l)
\]

In applications, this sum will often be over a large number of terms, so \(V\) may not have the resources to compute the sum without help.
Instead, she uses the \textbf{sum-check protocol} to force \(P\) to compute the sum for her. \(V\) wants to verify that the sum is correctly computed by \(P\), where \(g\) is a known multivariate polynomial over finite field F.

\subsection{Initialization}
\label{sec:org69b321a}
P claims that the total sum equals a specific value \(H_0\).

\subsection{First Round}
\label{sec:orgd7df33b}

\begin{algorithm}
\captionsetup{labelfont=bf, name=Protocol}
\caption{First Round of the Sum-check Protocol}
\begin{enumerate}
\item P sends a univariate polynomial \(s_1(x_1)\) to V, which is claimed to equal:
\label{sec:org69d9018}
\[
s_1(x_1) := \sum_{b_2 \in \{0, 1\}} \sum_{b_3 \in \{0, 1\}} ... \sum_{b_l \in \{0, 1\}} g(x_1, b_2,...,b_l)
\]

\item V calculates \(s_1(0) + s_1(1)\) and checks whether that value is equal to \(H_0\).
\label{sec:org214ae12}

Since \(s_1\) is a univariate polynomial, V can compute \(s_1(0) + s_1(1)\) directly (not using structure of \(H_0\)) in relatively short amount of time.

\item V picks a random element \(r_1\) from F and sends it to P.
\label{sec:org564094a}
\item V sets \(H_1 := s_1(r_1)\) for use in the next iteration.
\label{sec:orgd833d82}
\[
H_1 = s_1(r_1) := \sum_{b_2 \in \{0, 1\}} \sum_{b_3 \in \{0, 1\}} ... \sum_{b_l \in \{0, 1\}} g(r_1, b_2,...,b_l)
\]
\end{enumerate}

\end{algorithm}

\newpage

\subsection{Iterative Rounds \(( i = 2\) to \(l )\)}
\label{sec:org6154ac8}
For each round i (i from 2 to l):
\begin{algorithm}
\captionsetup{labelfont=bf, name=Protocol}
\caption{Round i to l of the Sum-check Protocol}
\begin{enumerate}
\item P sends a univariate polynomial \(s_i(x_i)\) to V, claimed to equal:
\label{sec:orgffbb8b2}
\[
s_i(x_i) := \sum_{b_{i+1} \in \{0, 1\}} ... \sum_{b_l \in \{0, 1\}} g(r_1,...,r_{i-1}, x_i, b_{i+1},...,b_l)
\]
  Which represents the partial sum over variables \(b_{i+1}\) to \(b_l\), with \(b_1\) to \(b_{i-1}\) fixed to random values chosen by \(V\) in previous rounds.

\item V calculates \(s_i(0) + s_i(1)\) and checks whether that value is equal to \(H_{i-1}\).
\label{sec:org27dbdd6}
\(H_{i-1}\) is the sum in the previous iteration: \(s_{i-1}(r_{i-1})\):

\[
H_{i-1} := s_{i-1}(r_{i-1}) := \sum_{b_{i} \in \{0, 1\}} ... \sum_{b_l \in \{0, 1\}} g(r_1,...,r_{i-1},b_i,...,b_l)
\]
\item V picks a random element \(r_i\) from \(F\) and sends it to P.
\label{sec:org4cd06e4}
\item V sets \(H_i = s_i(r_i)\) for use in the next iteration.
\label{sec:orgd8091e4}
\[
H_i = s_i(r_i) := \sum_{b_{i+1} \in \{0, 1\}} ... \sum_{b_l \in \{0, 1\}} g(r_1,...,r_i,b_{i+1},...,b_l)
\]
\end{enumerate}
\end{algorithm}

\subsection{Final Check}\label{final-check}
\label{sec:org87ac174}
In the final iteration: 
\[
H_l = s_l(r_l) := g(r_1, r_2, ..., r_l)
\]
All \(b_i\) in the original equation of \(g(b_1,...,b_l)\) has been fixed to the random number \(r_i\) chosen by V in previous \(l\) rounds.
V then checks whether \(s_l(r_l)\) is equal to \(g(r_1,r_2,...,r_l)\) \textbf{by calculating it} herself. \textbf{If \(s_l(r_l)\) is equal to \(g(r_1,...,r_l)\), V accepts}.

\subsection{Soundness of the Sum-Check Protocol}
\label{sec:org7e01221}

\begin{theorem}
    The Sum-check Protocol is Sound and Complete.
\end{theorem}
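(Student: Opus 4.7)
The plan is to establish completeness by direct inspection and to prove soundness by induction on the number of variables $l$, leveraging the univariate root bound from Section~\ref{low-degree-polynomials} (any non-zero polynomial of degree $d$ over $F$ has at most $d$ roots, so two distinct polynomials of degree at most $d$ agree on at most $d$ points). Throughout, let $d$ be an upper bound on the individual degree of $g$ in each variable, so that every honest or dishonest $s_i$ that $V$ could accept has degree at most $d$.

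For completeness, I would observe that when the initial claim $H_0$ really equals $\sum_{b_1,\ldots,b_l \in \{0,1\}} g(b_1,\ldots,b_l)$ and $P$ sends the honest univariate polynomial $s_i(x_i) = \sum_{b_{i+1},\ldots,b_l \in \{0,1\}} g(r_1,\ldots,r_{i-1},x_i,b_{i+1},\ldots,b_l)$ at round $i$, then by construction $s_i(0) + s_i(1) = H_{i-1}$ in every round, and at the last step $s_l(r_l) = g(r_1,\ldots,r_l)$ tautologically. Hence $V$ accepts with probability $1$.

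For soundness, I would induct on $l$. In the base case $l=1$, if the claim is false but the opening check $s_1(0)+s_1(1) = H_0$ nonetheless passes, then $s_1$ must differ from the true univariate polynomial $g$; the final check $s_1(r_1) = g(r_1)$ then fails except with probability at most $d/|F|$. For the inductive step, assume the protocol on $l-1$ variables has soundness error at most $(l-1)d/|F|$. If $H_0$ is false, then either the opening check already rejects, or the polynomial $s_1$ sent by $P$ is not the honest one; in the latter case, because the adversarial $s_1$ and the honest $s_1$ are both of degree at most $d$, the univariate root bound gives $\Pr[s_1(r_1) \text{ equals the honest value}] \le d/|F|$. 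Conditioned on this collision not occurring, the updated target $H_1 = s_1(r_1)$ is a false claim about the $(l-1)$-variate sum of $g(r_1, x_2, \ldots, x_l)$, to which the inductive hypothesis applies. A union bound then yields a total soundness error of at most $l\cdot d / |F|$.

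The main obstacle I expect is the bookkeeping around the recursion: one must argue that conditioning on $P$'s round-$i$ polynomial being dishonest, the "bad event" that $s_i(r_i)$ coincides with the honest partial sum at $r_i$ is genuinely a fresh univariate root event bounded by $d/|F|$, independent of the $r_j$ chosen in earlier rounds. Pinning down this conditional independence, and checking that the reduced sub-instance after round $i$ really is an honest sum-check instance on $l-i$ variables with a (now false) claim, is the delicate point; once it is in place the union bound and the completeness argument are straightforward.
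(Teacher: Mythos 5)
Your proof is correct, and it is the standard textbook argument. It differs in organization from the paper's: the paper proves soundness by \emph{backward reasoning}, starting from the fact that in the final check $V$ directly evaluates $g(r_1,\ldots,r_l)$, concluding that $s_l$ must be the honest polynomial w.h.p., then arguing inductively backward that $s_{l-1}, s_{l-2}, \ldots, s_1$ and finally $H_0$ are correct. You instead induct \emph{forward} on the number of variables $l$, propagating a false claim from round $1$ onward, bounding the probability of an ``escape'' at each round by the univariate root bound, and union-bounding. Your forward version is actually closer to what the paper does later in its separate Claim (where it derives the $ld/|F|$ soundness error by tracking the probability of deviation at round $i$ not being detected in rounds $i+1,\ldots,l$); in effect you have merged the paper's informal soundness proof and its quantitative error bound into a single cleaner induction. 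Your version is also slightly sharper in that you parameterize by the per-variable degree of $g$ rather than its total degree $d$ as the paper does; the per-variable degree is what actually bounds the degree of each $s_i$, and total degree is only an upper bound for it. The one ``delicate point'' you flag — conditional independence of the round-$i$ collision event — is handled correctly as you set it up: once $r_1,\ldots,r_{i-1}$ are fixed, both the honest partial-sum polynomial and the adversary's $s_i$ are determined, so the fresh uniform choice of $r_i$ gives a clean $d/|F|$ bound, and the residual instance after fixing $r_i$ is a legitimate $(l-i)$-variate sum-check with a (conditionally) false claim. So the concern is resolved rather than an actual gap.
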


\noindent \textbf{Completeness} holds by design: If P sends the prescribed messages, then all of V's check will pass. Let's prove the \textbf{soundness} of the protocol:

\begin{proof}
The Sum-Check Protocol is \textbf{Sound}:

\noindent The Section \ref{final-check} Final Check is the most crucial part, and plays a key role in understanding this protocol.
In the last step, V directly computes \(g(r_1,r_2,...,r_l)\) and \(s_l(r_l)\) to check whether they are equal, note that this is the only time for V to actually compute \(l\)-variate polynomial \(g\) by herself.

This direct comparison is critical because it anchors the entire verification process to the actual function \(g\).
According to \hyperref[sec:orge739764]{Schwartz-Zippel Lemma}, in the final check, if \(s_l \neq g\), then the probability \(g(r_1,...,r_l)\) equal to \(s_l(r_l)\) is less than \(d/|F|\). (\(d\) is the Total Degree of polynomial \(g\) and \(s_l\)).

\paragraph{Backward Reasoning:}
\label{sec:orgbc212aa}

By working backwards from the last iteration, We can understand the correctness of each step based on the validity of the final check:

In the last step, if the equality of \(g(r_1,...,r_l)\) and \(s_l(r_l)\) holds, \textbf{with high probability, \(s_l(x_l)\) must be the correct polynomial of \(g(r_1,...,x_l)\)},
because a dishonest \(P\) would need to guess \(r_l\) to fake \(s_l(x_l)\) in order to pass the test, and the chance of success is less than \(d/|F|\), where \(d\) is the totdal degree of polynomial \(g\) over field \(F\).

\noindent If V can confirm \(s_l(x_l)\) is correctly formed w.h.p in \hyperref[sec:org7228d30]{Final Check} where:
\[
s_l(x_l) := g(r_1,...,r_{l-1},x_l)
\]
\noindent Then in iteration \((l  - 1)\), \(s_{l-1}(r_{l-1})\) can be written as:
\begin{eqnarray*}
s_{l-1}(r_{l-1}) &=& \sum_{b_l \in \{0, 1\}} g(r_1,...,r_{l-2}, r_{l-1}, b_l) \\
                 &=& g(r_1,...,r_{l-1},0) + g(r_1,...,r_{l-1},1) \\
		 &=& s_l(1) + s_l(2)
\end{eqnarray*}

\noindent Thus, \(s_{l-1}(x_{l-1})\) must be correctly formed w.h.p based on correct \(s_l(x_l)\).

\paragraph{Inductive Case: }

If \(s_i(x_i)\) is correctly formed in round \(i\)
\label{sec:orgfc85d55}
For any round \(i\), if we can make sure \(s_i(x_i)\) is correctly formed w.h.p that:
\[
s_i(x_i) = \sum_{b_{i+1} \in \{0, 1\}}...\sum_{b_l \in \{0, 1\}} g(r_1,...,r_{i-1}, x_i, b_{i+1},...,b_l)
\]
Backwards to its previous round:
\begin{eqnarray*}
s_{i-1}(r_{i-1}) &=& \sum_{b_i \in \{0, 1\}} ... \sum_{b_l \in \{0, 1\}} g(r_1,...,r_{i-1},b_i,...,b_l) \\
                 &=& \sum_{b_{i+1} \in \{0, 1\}} ... \sum_{b_l \in \{0, 1\}} g(r_1,...,r_{i-1},0,b_{i+1},...,b_l) \\
                 &+& \sum_{b_{i+1} \in \{0, 1\}} ... \sum_{b_l \in \{0, 1\}} g(r_1,...,r_{i-1},1,b_{i+1},...,b_l) \\
		 &=& s_i(1) + s_i(2)
\end{eqnarray*}
Then, with high probability, \(s_{i-1}(r_{i-1})\) should be correct, which indicate \(s_{i-1}(x_{i-1})\) should also be correctly formed.
\textbf{By induction, in the first round, since \(s_2(x_2)\) is correctly formed, then w.h.p \(s_1(x_1) = s_2(0) + s_2(1)\) should be formed correctly.
And thus w.h.p, \(H_0 = s_1(0) + s_1(1)\) should be the correct answer.}
So far, we've proved the soundness of the protocol.
\end{proof}

\paragraph{Summary:}
We've proved the soundness of the sum-check protocol by showing that:
According to Schwartz-Zippel Lemma, with high probability a dishonest P cannot initially fake a incorrect \(H_0\)
that won't break the consistent in every round of the protocol and causes a correct \(s_l(r_l)\) equals to \(g(r_1, ..., r_l)\) in the final round.

\newpage

\subsection{Analyzing the Sum-Check Protocol}
\label{sec:org5fcb437}

\subsubsection{A Scenario where V cannot Compute \(g(r_1,...,rl)\)}
The soundness is relying on the final check, by validating the final step, V effectively validates all prior steps due to their interdependence in the chain of validations.
For example, here we show a specific cheating strategy when V cannot compute \(g(r_1,...,r_l)\) by herself:

Imagine a dishonest P consistently uses different function \(j(x)\) instead of the correct function \(g(x)\) throughout the protocol.
In this scenario, P computes all the partial sums and polynomials correctly with respect to \(j(x)\), ensuring consistency at each step, P only deviates from the correct computation at the final check when V computes \(g(r_1,...,r_l)\).

During the protocol, V only computes \(g(x)\) in the final check, the only point of detection is the final check, and the probability of detection is \(\frac{d}{|F|}\).
But if there is no final check, which means the protocol ends at round \(l\), \textbf{V cannot have any guess of what function P used to compute \(H_0\)}.

\subsubsection{Probability of Successful Cheating by P in Sum-check Protocol}

In this part, we are going to quantify the probability for a dishonest P can successfully convince V for the wrong computation \(H_0\).
\begin{claim}
For sum-check protocol on polynomial \(g: F^l \to F\) with total degree \(d\), if P is cheating at any round, the upper bound for V to accept is: \(ld/|F|\).
\end{claim}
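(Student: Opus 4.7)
The plan is to prove the bound by induction on the number of variables $l$, using the Schwartz-Zippel Lemma at each round to bound the probability that $V$'s random challenge ``rescues'' a cheating $P$ from detection. I denote by $\hat{s}_i(x_i)$ the polynomial $P$ is supposed to send in round $i$ (so that $\hat{s}_i(0) + \hat{s}_i(1) = H_{i-1}$ whenever $H_{i-1}$ is the correct partial sum), and by $s_i(x_i)$ the polynomial that a cheating $P$ actually sends; each has degree in $x_i$ at most the degree of $g$ in $x_i$, hence at most the total degree $d$.

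For the base case $l = 1$ the protocol consists of one round plus the Final Check. If $P$'s claim $H_0$ is incorrect, then either $s_1 = g$, in which case $V$'s check $s_1(0) + s_1(1) = H_0$ immediately fails, or $s_1 \neq g$, in which case the Final Check $s_1(r_1) = g(r_1)$ succeeds only if $r_1$ is a root of the nonzero univariate polynomial $s_1 - g$ of degree at most $d$. By Schwartz-Zippel this occurs with probability at most $d/|F|$, matching the claimed bound for $l = 1$.

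For the inductive step I would assume the bound $(l-1)d/|F|$ holds for every sum-check instance on an $(l-1)$-variate polynomial of total degree $d$ with an incorrect claimed sum. In round $1$, an incorrect $H_0$ forces $s_1 \neq \hat{s}_1$, since otherwise the check $s_1(0) + s_1(1) = H_0$ fails outright. Both polynomials have degree in $x_1$ at most $d$, so Schwartz-Zippel gives $\Pr_{r_1}[s_1(r_1) = \hat{s}_1(r_1)] \leq d/|F|$. Conditioned on the complementary event, the new value $H_1 := s_1(r_1)$ is an incorrect claim for the honest sum of the $(l-1)$-variate polynomial $g(r_1, x_2, \ldots, x_l)$, whose total degree is still at most $d$; the remaining $l-1$ rounds plus the Final Check are exactly a sum-check on this smaller instance with an incorrect initial claim. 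By the inductive hypothesis, $V$ accepts the residual protocol with probability at most $(l-1)d/|F|$. A union bound over the round-$1$ rescue event and the residual cheating event then yields $d/|F| + (l-1)d/|F| = ld/|F|$, as required.

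The main obstacle is not conceptual but notational bookkeeping: I have to verify that after round $1$ the residual object is genuinely an $(l-1)$-variate sum-check of total degree at most $d$ with an incorrect claim, so that the inductive hypothesis applies verbatim. A subtle point worth flagging is that a cheating $P$ could in principle submit some $s_i$ of degree exceeding $d$; the standard convention (which should be made explicit) is that $V$ rejects outright whenever the received univariate has degree larger than the appropriate bound, after which all polynomials in play have degree at most $d$ and the Schwartz-Zippel application to $s_i - \hat{s}_i$ is clean.
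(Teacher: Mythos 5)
Your proof is correct and takes essentially the same approach as the paper: bound by $d/|F|$ the probability that $V$'s random challenge ``rescues'' a cheating $P$ in a given round via Schwartz–Zippel, and sum this over the $l$ rounds. Your induction on $l$ is a cleaner formalization than the paper's argument, which parametrizes by the round $i$ at which $P$ first deviates and observes that $i=1$ is worst; and your closing remark that $V$ must reject any received polynomial of degree exceeding $d$ is an important hypothesis for the Schwartz–Zippel application that the paper leaves implicit.
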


\begin{proof} For any round \(i\), let's assume what a dishonest P sends to V (\(s_i(x_i)\)) is formed incorrectly (i.e., \(s_i(0) + s_i(1) \neq s_{i-1}(r_{i-1})\), deviation at round i), then.

In round \(i\), \(s_i(r_i)\) is set to \(H_i\) by V (in round \(l\), \(H_i\) becomes to \(g(r_1,...,r_l)\)).
\textbf{The probability for P to provide \(s_i\) to satisfy polynomial \(H_i - s_i(r_i) = 0\) with randomly selected \(r_i\) by V from field \(F\) is \(\frac{d}{|F|}\)}, where \(d\) is the total degree of polynomial \(g\) and \(s\), according to \hyperref[sec:orge739764]{Schwartz-Zippel Lemma}.

\textbf{If \(s_i(r_i) \neq H_i\), P is left to prove a false claim in the recursive call:} P must construct \(s_{i+1}(x_{i+1})\) such that \(s_{i+1}(0) + s_{i+1}(1) = s_i(r_i)\). This means \(s_{i+1}(r_{i+1})\) must deviate from true \(H_{i+1}\), leading to a new error polynomial in subsequent rounds.

Thus, we can get the the cumulative probability of acceptance when P deviates at round \(i\) and V does not detect in rounds \(i+1\) to \(l\) and end up accepting: \(\frac{(l-i)d}{|F|}\).

This summation approach accounts for the fact that each round provides an independent opportunity for \(V\) to catch cheating, effectively adding an "\textbf{error margin}" with each additional round. Thus, more rounds increase the total soundness error linearly, rather than multiplicatively, then we have:
\begin{eqnarray*}
P[A] &\leq& P[D_i] + P[D_{i+1,l}] \\
     &\leq& \frac{d}{|F|} + \frac{(l - i)d}{|F|} \\
     &=& \frac{(l - i + 1)d}{|F|}
\end{eqnarray*}
Where \( A \) denote "V accepts", \( D_i \) denote "V does not detect at \( i \)", and \( D_{i+1,l} \) denote "V does not detect in \(\{i+1, \ldots, l\}\)".

\textbf{The possibility of acceptance when P deviates at round \(i\) is \((l-i+1)d/|F|\).}

\noindent The worse-case scenario is that if P deviates in the first iteration \((i=1)\) (Error Margin), the total probability of acceptance is:
\[
\frac{d}{|F|} + \frac{(l-1)d}{|F|} = \frac{ld}{|F|}
\]
\end{proof}

\subsubsection{Example: \(l = 2\)}
\label{sec:org4741d56}
Let's consider a minimum sum-check protocol with \(l = 2\) by working backward from the last iteration to understand the soundness:
\begin{enumerate}
\item Final Check:
\label{sec:org6464e57}
\begin{enumerate}
\item V computes \(g(r_1,r_2)\)
\item Comparison with P's \(s_2(r_2)\)

If the equality holds, with high probability, \(s_2(x_2)\) must be correct polynomial \(g(r_1,x_2)\), because a dishonest \(P\) would need to guess \(r_2\) to fake \(s_2(x_2)\).
\end{enumerate}

\item Round 2:
\label{sec:org998da04}
Since \(s_2(x_2)\) that P sends to V is confirmed to be correct, the sum: \(s_2(0) + s_2(1) = g(r_1,0) + g(r_1,1) = H_1\) must be satisfied.
This confirms that \(H_1\) is correctly computed based on \(s_2(x_2)\).
\item Round 1:
\label{sec:orga4f419e}
At the end of this round, P sets \(H_1 = s_1(r_1)\). Since \(H_1\) is now confirmed to be \(g(r_1,0) + g(r_1,1)\), it implies \(s_1(r_1) = g(r_1,0) + g(r_1,1)\),
thus \(s_1(x_1)\) is correct polynomial \(\sum_{b_2 \in \{0, 1}\} g(x_1, b_2)\) w.h.p, any deviation would be detected with high probability due to the random \(r_1\).
\item Initialization:
\label{sec:orgc7a256c}
V check \(s_1(0) + s_1(1) = H_0\)
Since \(s_1(x_1)\) is correct w.h.p, then \(H_0\) should be correctly formed based on \(s_1(x_1)\).

\end{enumerate}

\noindent \(V\) only needs to perform a few evaluations of univariate-polynomials and checks (except the final check \(g\)), making the protocol practical even for large computations.

\subsection{coNP \(\in\) IP}\label{coNP-IP}

Having introduced the sum-check protocol, we now demonstrate how it empowers interactive proof systems to verify statements in \(\mathsf{coNP}\) efficiently. Recall from Section~\ref{cook-levin} the \textit{Cook--Levin Theorem}, which establishes that the \textsc{SAT} problem is  \(\mathsf{NP}\) -complete. Consequently, any problem in  \(\mathsf{NP}\)  can be transformed into an instance of \textsc{SAT} in polynomial time. A canonical example is the \textsc{HAMPATH} problem, which asks whether a given directed graph \(G\) contains a Hamiltonian path.

In this subsection, we turn our attention to \emph{complements} of  \(\mathsf{NP}\)  problems, such as \textsc{Co-Hamiltonian Path} (\textsc{Co-HAMPATH}), which asks whether a directed graph \emph{does not} contain a Hamiltonian path. Traditional  \(\mathsf{NP}\)  proof systems do not straightforwardly provide a ``short certificate'' for the non-existence of certain structures (e.g., no Hamiltonian path). Therefore, their capability to verify statements in \(\mathsf{coNP}\) is limited.

By leveraging the sum-check protocol, we can  Specifically, we reduce any \(\mathsf{coNP}\) problem, such as \textsc{Co-HAMPATH}, to a corresponding \(\#\text{SAT}\) instance. For instance, proving that \(\#\text{SAT}(\phi) = 0\) (i.e., \(\phi\) has no satisfying assignments) corresponds to showing there is no Hamiltonian path in \(G\). The key insight is that verifying a \(\#\text{SAT}\) claim via the sum-check protocol can be done in \emph{polynomial time} with the help of an interactive \(P\)--\(V\) framework.

Therefore, by composing these steps---polynomial-time reduction to \(\#\text{SAT}\) plus an efficient interactive verification protocol (sum-check)---we conclude that every problem in \(\mathsf{coNP}\) admits an interactive proof system.

\begin{theorem}[\cite{10.1145/146585.146609}]
\(\#\text{SAT} \subseteq \mathsf{IP}.\)
\end{theorem}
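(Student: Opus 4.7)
The plan is to reduce verifying any \(\#\text{SAT}\) claim to a single invocation of the sum-check protocol developed in Section~\ref{sec:org46e8bf2}. Given a 3CNF formula \(\phi(x_1,\ldots,x_n)\) with \(m\) clauses and a claimed satisfying-assignment count \(K\), I would arithmetize \(\phi\) into a polynomial \(g_\phi\) over a suitably large finite field \(F\) so that \(g_\phi\) agrees with \(\phi\) on \(\{0,1\}^n\). The claim then takes precisely the sum-check form
\[
K \;\stackrel{?}{=}\; \sum_{b_1\in\{0,1\}}\cdots\sum_{b_n\in\{0,1\}} g_\phi(b_1,\ldots,b_n),
\]
and the entire interactive protocol reduces to verifying this sum.

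The first step is the arithmetization itself: send the literal \(x_i\) to \(x_i\), \(\neg x_i\) to \(1-x_i\), each clause \(\ell_1 \vee \ell_2 \vee \ell_3\) to \(1-(1-\ell_1)(1-\ell_2)(1-\ell_3)\), and take the product over all \(m\) clauses. The resulting \(g_\phi\) has total degree \(d \le 3m\), evaluates to \(1\) on satisfying Boolean assignments and \(0\) on the others, and is given explicitly as an arithmetic circuit of size polynomial in \(|\phi|\). Second, I would choose a prime field \(F\) with \(|F|\ge\max(2^{n+1},\,4nd)\); this lets \(F\) simultaneously represent any count up to \(2^n\) and keep the sum-check soundness error \(nd/|F|\) below a constant. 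Third, run the sum-check protocol on \(g_\phi\) with initial claim \(H_0 = K\). In each of the \(n\) rounds, \(V\) receives a univariate polynomial of degree at most \(d\) and performs only \(\mathrm{poly}(n,m)\) work. In the final check (Section~\ref{final-check}), \(V\) must evaluate \(g_\phi(r_1,\ldots,r_n)\) herself; because \(g_\phi\) is an explicit product of \(m\) clause polynomials, this evaluation runs in time polynomial in \(|\phi|\).

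Completeness is immediate from the design of sum-check, and soundness follows from the bound already proved in Section~\ref{sec:org7e01221}: a cheating \(P\) trying to push through any incorrect \(K\) is caught except with probability at most \(nd/|F|\), which can be driven arbitrarily low by enlarging \(F\) or by parallel repetition. The main obstacle I anticipate is calibrating the field size correctly: \(|F|\) must simultaneously be large enough to represent counts up to \(2^n\) and to force the soundness error below a constant, yet small enough that field arithmetic is polynomial-time. Resolving this requires picking a prime of roughly \(n + O(\log(nm))\) bits, after which both \(P\) and \(V\) run in time polynomial in \(|\phi|\) and the inclusion \(\#\text{SAT}\subseteq\mathsf{IP}\) follows. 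As a corollary, any \(\mathsf{coNP}\) instance can be reduced in polynomial time to a \(\#\text{SAT}\) claim of the form \(\#\text{SAT}(\phi) = 0\), giving the containment \(\mathsf{coNP}\subseteq\mathsf{IP}\) invoked earlier in the survey.
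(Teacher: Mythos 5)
Your proof uses the same overall strategy as the paper --- arithmetize \(\phi\) into a low-degree polynomial \(g_\phi\), then run the sum-check protocol on the claim \(K = \sum_{b \in \{0,1\}^n} g_\phi(b)\) --- so the core idea matches. The chief difference is in the arithmetization step: you restrict to 3CNF and arithmetize clause-by-clause, replacing \(\ell_1 \vee \ell_2 \vee \ell_3\) with \(1-(1-\ell_1)(1-\ell_2)(1-\ell_3)\) and taking the product over clauses, which yields a clean total-degree bound of \(3m\). The paper instead arithmetizes an arbitrary Boolean formula gate-by-gate (\(\text{NOT} \to 1-x\), \(\text{AND} \to xy\), \(\text{OR} \to x+y-xy\)) and bounds the degree in terms of the formula's depth and size. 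Your bound is tighter for CNF input, but be aware the theorem is stated for \(\#\text{SAT}\) over general Boolean formulas, and the standard SAT-to-3SAT reduction does not preserve the number of satisfying assignments unless you explicitly use a parsimonious variant (e.g., Tseitin with uniquely-determined auxiliary variables); you should either note this or, as the paper does, arithmetize the given formula directly. On the other hand, your explicit calibration of the field size --- choosing a prime \(|F| \ge \max(2^{n+1}, 4nd)\) so that the count \(K\) is representable and the sum-check soundness error \(nd/|F|\) stays below a constant, while keeping field arithmetic polynomial-time --- is a genuine improvement over the paper, which never addresses what field to work over or how to represent counts as large as \(2^n\), even though this is essential for the protocol to be well-defined.
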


\noindent Let \(\#\text{SAT}\) be the number of satisfying assignments of Boolean formula \(\phi\).
\[\#\text{SAT} = \{ \langle\phi, k\rangle \mid k = \#\phi\}\]

\begin{proof}
    \(\#\text{SAT} \subseteq \mathsf{IP}\) \cite{DBLP:books/daglib/0086373}[Chapter 10].
    
\noindent \cite{sip}Assume \(\phi\) has \(m\) variables \((x_1,...,x_m)\), let \(\phi(a_1,...a_i)\) be \(\phi\) with \(x_1 = a_1,\) ... \(,x_i = a_i\) for \(a_1,...,a_i \in \{0, 1\}\). Then we call \(a_1,...,a_i\) presets, the remaining \(x_{i+1},...,x_m\) stay as unset variables. Then equivalently:
\[
\#\phi(a_1,...,a_i) = \sum_{a_{i+1},...,a_{m} \in \{0, 1\}}{\phi(a_1,...,a_m)}
\]
\[
\#\phi(a_1,...,a_i) = \#\phi(a_1,...,a_i,0) + \#\phi(a_1,...,a_i,1)
\]
We can apply the sum-check protocol on \(\#\phi\) if \(\phi\) is a low-degree polynomial. 
By going gate-by-gate through \(\phi\), we can replace each gate with the gate's arithmetization:
\[
\text{NOT}(x) \to 1 - x
\]
\[
\text{AND}(x, y) \to x \times y
\]
\[
\text{OR}(x, y) \to x + y - x \times y
\]
To complete the proof of the theorem, we need only show that V operates in probabilistic polynomial time, let's use \(g(x_1,...,x_m): \{0, 1\}^m \to \{0, 1\}\) to denote the low-degree polynomial of \(\#\phi\). For \(0 \leq i \leq m\) and for \(a_1,...,a_i \in \{0, 1\}\), let
\[
g(a_1,...,a_i) = \sum_{a_{i+1},...,a_m \in \{0, 1\}}{\phi(a_1,...,a_m)}
\]
For Boolean formula \(\phi\) with S gates, where each gate can affect the overall degree at most 2 (\(\text{AND}(x, y) = x \times y) \). In the worst-case scenario (where there are all \(\text{AND}\) gates in \(\phi\)), the boolean formula \(\phi\), which is tree-like with each gate having a single output path and no shared sub-expressions, the degree of the polynomial representing \(\phi\) can be up to \(2^D\), where \(D\) is the depth of the formula. For a balanced formula, the depth \(D\) is in \(\mathcal{O}(\log{S})\), therefore, the degree of \(g\) is at most
\(2^{\log{S}} = \mathcal{O}(S)\). So for \(V\), in each round:

\paragraph{Communication Cost:} Since polynomial \(g: \{0, 1\}^m \to \{0, 1\}\) can be represented by its coefficients, requiring at most \(\mathcal{O}(S)\) elements (the degree of \(g\)), P sends a polynomial \(s\) of degree equal to \(g\) to V, which is in size \(\mathcal{O}(S)\), and V sends one field element \(r\) in each round. The total communication cost is \(\mathcal{O}(m \cdot S)\).

\paragraph{V Time Complexity:} 
It takes \(\mathcal{O}(S)\) time for V to process each of \(m\) messages (\(s\) in degree \(\mathcal{O}(S)\)) of P, and \(\mathcal{O}(S)\) time to evaluate \(g(r)\). The total cost is \(\mathcal{O}(m \cdot S)\).
    
\end{proof}

\noindent In other words, the non-existence of certain objects (e.g., a Hamiltonian path) can be certified interactively without requiring an exhaustive search over all possibilities by \(V\) using the interactive sum-check protocol.

\subsection{Limitation of the Sum-Check Protocol}\label{limitation}
However, while V's runtimes scales linearly with the circuit size, the practical limitation lies in the\textbf{ P's runtime}. In each of the m rounds, P must compute the univariate polynomial \(g_i\) by summing over all \(2^{m-i}\) possible assignments of the remaining \(m-i\) variables, since \(g\) has degree \(S\), P's runtime complexity becomes \(\mathcal{O}(Sm\cdot2^m)\). This exponential time requirement makes the sum-check protocol impractical for even relatively simple problems, as P cannot feasibly perform the necessary computations.

\section{General Purpose Interactive Proof Protocol}
\label{sec:org9ed87d4}

\subsection{Introduction to the GKR Protocol}
\label{sec:orgc88a16f}

\subsubsection{Recall: The Notion of Interactive Proofs in 1980s}
\label{sec:org78a33f7}
Recall \hyperref[sec:orgf2c105b]{Interactive Proof Systems} in 80s\cite{DBLP:conf/stoc/GoldwasserMR85}, when the IP model first came out, it was only a theoretical model, which means no one cares about the runtime of the all-powerful P.
At that time, the idea\cite{yaelgkrv} was people want to see how expressive, which computation can P prove to a polynomial time \(V\) by using interactive proofs.
\subsubsection{Delegating Computation: Interactive Proofs for Muggles}
\label{sec:org9f9b9c5}

In the \cite{DBLP:conf/stoc/GoldwasserKR08} paper,
the author introduced a protocol that can be used to \textbf{effectively} for both P and V to prove/verify the correctness of \textbf{general purpose computation}.
More formally: For any question/language computable by a boolean circuit \(C\) with depth \(d\) and input length \(n\), the protocol can ensure:
\begin{itemize}
\item The costs to V grow linearly with the depth \(d\) and input size \(n\) of the circuit, and only logarithmically with size \(S\) of the circuit.

\item \textbf{P's running time is polynomial in the input size \(n\).}

\end{itemize}

\subsubsection{Blueprint of the Protocol}

\paragraph{Layered Circuit:}
The protocol divides circuit \(C\) into \(d\) phases, since for each phase \(i\), if its gates value is wrong, then some gates' value in phase \((i+1)\) must be wrong.
More specifically, some gates that are connected to the error gates in layer \(i+1\) must be incorrect.

\paragraph{Local Correctness:}
We run a local \textbf{sum-check protocol} at each phase/layer to ensure \textbf{local correctness} by defining a function \(V_i: F^{s_i} \to F\) (where \(s_i := log Si\), the bits to represent \# gates in layer \(i\)), for any gate \(g_i\) in that layer \(i\), running \(V_i(g_i)\) will give us the corresponding gate value of \(g_i\).
By running \(d\)-subprotocols of each layer, where each protocol show connections between layer \(i\) and its layer before \((i+1)\). (More specifically, if \(V_i\) is not correct, then w.h.p \(V_{i+1}\) is not correct).
\textbf{Eventually, it will be reduced to a claim about the input values (layer \(d\)), which are known to \(V\).}

\subsection{Notations of the GKR Protocol}
\label{sec:org97e91cb}
Assume without loss of generality, circuit \(C\) has \(d\) layers, which means that each gate belongs to a layer, and each gate in layer \(i\) is connected by neighbors (determined) only in layer \(i+1\).
In a nutshell, the goal of the GKR protocol introduces a layered approach to circuit verification, significantly reducing \(P\)'s runtime to polynomial time.

\noindent \textbf{Notations: }
\cite{l12}Convert C to a layered arithmetic circuit with fan-in 2 with layer \(d\), and only consists of gates of the form \textbf{ADD} and \textbf{MULT}.
fan-in 2 means each gates in the i-th layer takes inputs from two gates in the (i+1)-th layer. We denote the number of gates in layer \(i\) as \(S_i\), and let \(s_i\) to be the number of input elements of layer \(i\). (\(s_i = \log{S_i}\))
We define function \(V_i(z_i)\) at each layer \(i\) to return the value of that gate with index \(z_i\)\footnote{In practice, we often choose \(m = \frac{\log{S_i}}{\log{log{S_i}}}\) and \(H = \{0, 1, ..., \log{S-1}\}\) to ensure \(F^{s_i}\) is not super-polynomial in size, for simplicity, here we use binary value as input.}:

\[
V_i: H^{s_i} \to F \textbf{  where  } H = \{0,1\} \textbf{  and  } s_i = \log{S_i}
\]

\noindent We designate \( V_0 \) as corresponding to the output of the circuit, and \( V_d \) as corresponding to the input layer. To define \( V_i \) for \( 1 \leq i \leq d \), recall the function \( W \) from Section \ref{lde}. Consider layer \( i \) of the circuit \( C \) as a vector of \( S_i \) gates:
\[
g = (g_1, g_2, \ldots, g_{S_i}),
\]
where each \( g_j \) represents the value of the \( j \)-th gate in that layer.

\noindent Define the function \( V_i: H^{s_i} \to F\) as follows:
\[
V_i(z) =
\begin{cases}
g_{a(z)} & \text{if } a(z) \text{ is a valid gate in layer } i, \\
0 & \text{otherwise}.
\end{cases}
\]
\noindent Note that for every \(p \in H^{s_i}\), we can express \(V_i(p)\) in the circuit as:

\[
V_i(p) = \sum_{w_1, w_2 \in H^{s_{i+1}}} (add_i(p, w_1, w_2) \times V_{i+1}(w_1) + V_{i+1}(w_2)) +(mult_i(p, w_1, w_2) V_{i+1}(w_1) \times V_{i+1}(w_2) )
\]

\noindent where \(add_i (mult_i)\) takes one gate label \(p \in H^{s_{i}}\) of layer i and two gate labels \(w_1\), \(w_2 \in H^{s_{i+1}}\) in layer \(i+1\), and outputs 1 if and only if gate p is an addition (multiplication) gate that takes the output of gate \(w_1, w_2\) as input, and otherwise 0.
\begin{algorithm}
\captionsetup{labelfont=bf, name=Protocol}
\caption{Constructing Multilinear Extensions of \(V_i\) at Layer i}
\label{construct}
In the i-th phase \((1 \leq i \leq d)\): P runs a local protocol with V to argue for the correctness of \(V_i\).
To do this, a local \textbf{sum-check protocol} of layer i will be applied, however, recall that the sum-check protocol only works if the expression inside the sum is a low-degree (multi-variate) polynomial, so let's try to convert \(V_i(p)\) to corresponding \textbf{multilinear extensions} by leveraging the Lagrange basis polynomial introduced in section \ref{lagrange}, for \(z \in F^{s_i}\):
\[
\widetilde{V_i}(z) = \sum_{p \in H^{s_i}} \widetilde{B}(z, p) \times \widetilde{V_i}'(p)
\]
Where \(\widetilde{V_i}'(z)\): \(F^{s_0} \to F\) refers to the LDE of \(V_i(p):\)
\[
\widetilde{V_i}'(z) = \sum_{w_1, w_2 \in H^{s_{i+1}}} (\widetilde{add_i}(z, w_1, w_2) \times (\widetilde{V_{i+1}}(w_1) + \widetilde{V_{i+1}}(w_2))) + (\widetilde{mult_i}(z, w_1, w_2) \times \widetilde{V_{i+1}}(w_1) \times \widetilde{V_{i+1}}(w_2))
\]
 Let \(\widetilde{add_i}, \widetilde{mult_i}: F^{s_i + 2s_{i+1}} \to F \) be the LDEs of \(add_i \text{ and } mult_i\), respectively.
 By replacing \(\widetilde{V_i}'(p)\) to its definition, we can get for every \(z_i\) in \(F^{s_i}\), let \(f_{i,z_i}: H^{s_i + 2s_{i+1}} \to F \) where:
\[
f_{i, z_i}(p, w_1, w_2) = \widetilde{B}(z_i, p) \times [(\widetilde{add_i}(p, w_1, w_2) \times (\widetilde{V_{i+1}}(w_1) + \widetilde{V_{i+1}}(w_2))) + (\widetilde{mult_i}(p, w_1, w_2) \times \widetilde{V_{i+1}}(w_1) \times \widetilde{V_{i+1}}(w_2))]
\]

\noindent Then \(\widetilde{V_i}(z_i): F^{s_0} \to F\) can be expressed as:
\[
\widetilde{V_i}(z_i) = \sum_{p \in H^{s_i}} \sum_{w_1, w_2 \in H^{s_{i+1}}}f_{i,z_i}(p, w_1, w_2)
\]

\end{algorithm}

\newpage

\subsection{GKR Protocol at Layer 0}
To begin with, let's describe the local sum-check protocol at layer 0 (the output layer) to show the idea of the GKR protocol. Recall:
\begin{itemize}
    \item \textbf{Layer 0 (Output Layer):} Contains a single gate \(z_0\).
    \item \textbf{Layer 1:} Contains gates \(w_j\) in \(H^{s_{i+1}}\), each associated with their LDEs \(\widetilde{V_{i+1}}(w_j)\).
\end{itemize}
In layer 0, P and V's task is to reduce the claim of \(\widetilde{V_0}(z_0) = r_0\) to the claim of \(\widetilde{V_1}(z_1) = r_1\), where \(z_1 \in F^{s_1}\) is a random value selected by V in layer 1.
\begin{algorithm}
\captionsetup{labelfont=bf, name=Protocol}
\caption{Sum-check Protocol at Layer 0}
\begin{enumerate}
    \item \textbf{Initial Claim:} P claims that \(\widetilde{V_0}(z_0) = r_0\), according to the definition of \(\widetilde{V_0}\), \(r_0\) is the output of the circuit \(C\).
    \item \textbf{Round 1:} P computes a univariate polynomial \(g_1(w_1): F^{s_1} \to F\) defined as:
    \[
    g_1(w_1) = \sum_{w_2 \in H^{s_1}} f_{0,z_0} (z_0, w_1, w_2)
    \]
    Which encapsulates the partial sum over \(w_2\) for each fixed \(w_1\) in layer 1.
    
    \noindent P then sends \(g_1\) to V, V needs to ensure that the total sum over all \(w_1\) equals \(r_0\), this is done by verifying that:
    \[
    \sum_{w_1 \in H^{s_1}} g_1(w_1) \equiv r_0
    \]
    V then selects a random value \(z_1 \in F^{s_1}\) and sends to P.
    \item \textbf{Final Round:} P computes a univariate polynomial \(g_2(w_2): F^{s_1} \to F\) defined as:
    \[
    g_2(w_2) = f_{0,z_0} (z_0, z_1, w_2)
    \]
    P then sends \(g_1\) to V, V then picks a random value \(z_2 \in F^{s_1}\) and verify the correctness of \(g_2\) on her own: Let's recall the definition of \(f_{i,z_i}: F^{s_i + 2s_{i+1}} \to F \) then \(f_{0,z_0}(z_0, z_1, z_2)\) can be defined as:
    \[
    \widetilde{B}(z_0, z_0) \times [(\widetilde{add_i}(z_0, z_1, z_2) \times (\widetilde{V_{i+1}}(z_1) + \widetilde{V_{i+1}}(z_2))) + (\widetilde{mult_i}(z_0, z_1, z_2) \times \widetilde{V_{i+1}}(z_1) \times \widetilde{V_{i+1}}(z_2))]
    \]
    V can compute on her own \(\widetilde{add_i}\) and \(\widetilde{mult_i}\) on \((z_0, z_1, z_2) \in F^{s_0 + 2s_1}\) (as well as other \((p_0, w_1, w_2) \in H^{s_0 + 2s_1}\) to check the correctness of the circuit). So as \(\widetilde{B}\).

    The computational burden in this verificational task is computing \(\widetilde{V_1}(z_1)\) and \(\widetilde{V_1}(z_2)\), since they related to the value of gates in layer \(2, 3, ... d\).

So instead, in this protocol, P sends both these values \(r_{1,1} = \widetilde{V_1}(z_1)\) and \(r_{1,2} = \widetilde{V_1}(z_2)\) to V, and claim they are true. And then using the following interactive reduction Protocol \ref{reduction} to "reduce" two claims into a single claim \(r_1 = \widetilde{V_1}(z_1')\) used in layer 1.
    
\end{enumerate}
\end{algorithm}

\noindent Here we show the Protocol \ref{reduction} which encodes two claims into a single polynomial, which efficiently reduces the verification burden across multiple layers. The reduction protocol combines the claims \( r_{1,1} = \widetilde{V_1}(z_1) \) and \( r_{1,2} = \widetilde{V_1}(z_2) \) into a single polynomial \( f(r) \), verified by \( V \) through random challenge \( r \), and in the next round P is left to prove the correctness of \(f(r) = r_1 \equiv \widetilde{V_1}(\gamma(r))\), which is a single claim.

\begin{algorithm}[H]
\captionsetup{labelfont=bf, name=Protocol}
\caption{Reduction Protocol at Layer 0~\cite{10.1145/2699436}}
\label{reduction}
    \begin{enumerate}
        \item \textbf{Initial Claims:} At the end of layer 0, P claims that:

        \[
        r_{1,1} = \widetilde{V_1}(z_1) \text{ and } r_{1,2} = \widetilde{V_1}(z_2)
        \]
        Both \( P \) and \( V \) agree on two distinct fixed elements \( t_1, t_2 \in F \).

        \item \textbf{Define Linear Map:} 
        V defines the unique linear polynomial \( \gamma: F \to F^{s_1} \) such that:
        \[
        \gamma(t_1) = z_1 \quad \text{and} \quad \gamma(t_2) = z_2
        \]
        This linear map effectively creates a line connecting \( z_1 \) and \( z_2 \) in the field \( F^{s_1} \).
        \item \textbf{P Sends \(f\) to V}: 
        \( P \) constructs and sends univariate polynomial \( f(t): F \to F \) to V that passes two points \((t_1, r_{1,1})\) and \((t_2, r_{1,2})\) defined as:
        \[
        f(t) = \widetilde{V_1}(\gamma(t)) = \widetilde{V_1}(z_1) \text{ at } t = t_1 \text{ and } \widetilde{V_1}(z_2) \text{ at } t = t_2
        \]

        \item \textbf{V Checks \(f\): } V checks that \(f\) indeed pass those two claimed points:
        \[
        f(t_1) = r_{1,1} \quad \text{and} \quad f(t_2) = r_{1,2}
        \]
        \item \textbf{Random Challenge by V:} If \(f\) is also in correct degree, V can argue the correctness of \(f(t) = \widetilde{V_1}(\gamma(t)) \) by choosing a random element \(r \in F\) and sends it to P.
        \item \textbf{Verification at Layer 1: } P and V sets:
        \[
        z_1' := \gamma(r) \quad \text{and} \quad r_1 := f(r)
        \]
        and initiates a Sum-check protocol at layer 1 to verify the correctness of single claim:
        \[
        r_1 = \widetilde{V_1}(z_1')
        \]
    \end{enumerate}
\end{algorithm}

\paragraph{Soundness of the Reduction Protocol:} 
In our example at layer 0, the protocol proceeds to layer 1 where a new Sum-check protocol is initiated to verify the correctness of \(\widetilde{V_1}(z_1)\). And just like previous round, in layer 1, V examines the structure of the circuit to ensure that \(\widetilde{V_1}(z_1)\) is correctly formed computed based on the circuit's specifications.

Let us say if P attempts to deceive by constructing an incorrectly \(f'(t)\) that only satisfies \(f'(t_1) = r_{1,1} \text{ and } f'(t_2) = r_{1,2}\), it is highly unlikely that \(f'(t)\) will also satisfy \(f'(r) = \widetilde{V_1}(\gamma(r))\) for the randomly chosen \(r\). The reason is due to the random selection of \(r\) by V, which ensures that P cannot predict \(r\) in advance to tailor \(f'(r)\) accordingly, and according to Schwartz-Zippel Lemma~\cite{sch}, if \(f'(r)\) does not correctly represent \(\widetilde{V_1}(\gamma(r))\), the probability that it passes the check at a random \(r\) is negligible for large \(|F|\). In other words, if the value \(r_1\) that \(r_1 = f(r)\) is consistent according to the sum-check protocol on \(r_1 = \widetilde{V_1}(\gamma(r))\) at layer 1, since \(r\) is randomly selected by V, this means \(f(t)\) is correctly formed with definition (\(f(t) = \widetilde{V_1}(\gamma(t))\)). And since \(f\) passes two points \((t_1, r_{1,1})\) and \((t_2, r_{1,2})\), we can get that:
\[
r_{1,1} = f(t_1) = \widetilde{V_1}(z_1) \text{ and } r_{1,2} = f(t_2) = \widetilde{V_1}(z_2)
\]
are both correct with high probability.

\subsection{GKR Protocol at Layer \(d\) and the Final Check}
\label{sec:org42bca73}

In layer \(d\), which is very similar to previous phases. P wants to convince V that \(r_d = \widetilde{V_d}(z_d')\), and at the end of the protocol in this layer, P will send \(g(z): F^{s_d} \to F\) \textbf{which refers to the low-degree polynomial of the input}, and V can verify on her own. If all the input matches, this means function \(g\) is correctly formed, thus according to the Sum-check protocol \(\widetilde{V_d}\) is also correctly formed, and in the previous layer, \(\widetilde{V_{d-1}}\) is also valid etc. Thus, according to the \hyperref[sec:org7e01221]{Soundness of the Sum-Check Protocol}, especially \hyperref[sec:orgbc212aa]{Backward Reasoning}, we can get  \(V_0\) is correctly formed, which implies \(C(x) = r_0\) should be correct w.h.p.

\subsection{Analyzing \(P\)'s Runtime Complexity}

The traditional Sum-Check Protocol is a foundational component in interactive proof systems, enabling the verification of polynomial evaluations over exponentially large domains. Specifically, recall Section \ref{limitation}, in each of \( m \) rounds, \(P\) must compute a univariate polynomial \( s_i(x_i) \) by summing over all \( 2^{m-i} \) possible assignments of the remaining \( m-i \) variables. This results in a runtime complexity of \( \mathcal{O}(Sm \cdot 2^m) \) for \(P\), where \( S \) denotes the degree of the polynomial. Such exponential time requirements make the traditional Sum-Check Protocol impractical for large-scale computations.

In contrast, the \textbf{GKR Protocol} leverages the structured nature of layered arithmetic circuits to achieve improved efficiency. Each layer \( i \) of the circuit comprises \( S_i \) gates with a fixed fan-in of 2, performing either addition or multiplication. Recall Protocol \ref{construct}, the gate function \( \widetilde{V_i}(z_i) \) is defined as:
\[
\widetilde{V_i}(z_i) = \sum_{p \in H^{s_i}} \sum_{w_1, w_2 \in H^{s_{i+1}}} f_{i,z_i}(p, w_1, w_2)
\]
where \( H \) represents the set of possible gate indices, and \( f_{i,z_i} \) encapsulates the computation performed by each gate in layer \( i \). the GKR Sum-Check Protocol aggregates contributions gate-by-gate within each layer, which means considering all gates in layer \(i+1\), and this is feasible because the values of gates in layer \( i+1 \) are precomputed and stored.
As a result, \(P\)'s runtime at each layer \( i \) becomes \( \text{poly}(S_i) \), where \( S_i \) is the number of gates in that layer. This polynomial-time complexity starkly contrasts with the exponential runtime of the traditional Sum-Check Protocol, underscoring the practicality of the GKR Protocol for delegating and verifying large computations. By capitalizing on the structural properties of layered arithmetic circuits, the GKR Protocol ensures that \(P\) can efficiently perform verification tasks, maintaining scalability even as the size of the computation grows.

% This sample uses bibtex rather than biblatex.
\bibliography{abbrev0,crypto,ref}

\end{document}